\documentclass[copyright,creativecommons]{eptcs}
\usepackage{breakurl}             



\usepackage{latexsym}
\usepackage{graphicx}
\usepackage{color}
\usepackage{amsmath}
\usepackage{verbatim}
\usepackage{amsfonts,amssymb}
 
 \usepackage{tikz}
\usetikzlibrary{arrows,shapes}
 


\usepackage[arrow,matrix,curve]{xy}

\DeclareMathAlphabet{\mathcal}{OMS}{cmsy}{m}{n}

\numberwithin{equation}{section}

\sloppy
\parindent0em


\newcommand{\FV}{{\mathit{FV}}}

\newcommand{\Varmult}{\mathit{FVmult}}

\newcommand{\Vdepth}{\mathit{Vdepth}}

\newcommand{\mainp}{\mathit{holep}}

\newcommand{\ignore}[1]{}

\newcommand{\val}{{\mathit{val}}}

\newcommand{\qed}{\hfill $\Box$}

\newtheorem{theorem}{Theorem}[section] 
{\bfseries}{\rmfamily}
{\bfseries}{\itshape}
\newtheorem{example}[theorem]{Example} 
\newtheorem{lemma}[theorem]{Lemma} 
\newtheorem{definition}[theorem] {Definition}
\newtheorem{proposition}[theorem] {Proposition}
\newtheorem{corollary}[theorem]{Corollary} 
\newtheorem{remark}[theorem]{Remark}

\newenvironment{proof}{{\it Proof.}} {\qed}
\newenvironment{proof*}{{\it Proof.}} {}

 \title{Linear Compressed Pattern Matching for Polynomial Rewriting (Extended Abstract)}
\author{Manfred Schmidt-Schauss    
\institute{Institut f\"ur Informatik,\\
Fachbereich Informatik und Mathematik,\\
Goethe-Universit{\"at},\\
Postfach 11 19 32,
D-60054 Frankfurt, Germany}
\email{schauss@ki.informatik.uni-frankfurt.de}
}

\begin{document}


\bibliographystyle{eptcs}
 

 

%
%


\maketitle
%
%
%
%
%
%

   \begin{abstract}
     This paper is an extended abstract of an analysis of  term rewriting where the terms in the rewrite rules 
     as well as the term to be rewritten are compressed by a singleton tree grammar (STG). This form of compression is more general than
      node sharing or representing terms as dags since also partial trees (contexts) can be shared in the compression. 
      In the first part efficient  but complex algorithms for detecting applicability of a rewrite rule under STG-compression 
      are constructed and analyzed.
      The second part applies these results to term rewriting sequences. 
      
      The main result for submatching is that finding a redex of a left-linear rule can be performed in polynomial time under STG-compression.
      
      The main implications  for rewriting and (single-position or parallel) rewriting steps are: (i)  under STG-compression, $n$  rewriting steps can be performed 
      in nondeterministic 
      polynomial time. 
       (ii) under STG-compression and for left-linear rewrite rules a sequence of $n$ rewriting steps can be performed 
       in polynomial  time,
       and (iii) for compressed rewrite rules where the left hand sides are either DAG-compressed or ground and STG-compressed, 
          and an STG-compressed target term, 
            $n$  rewriting steps can be performed in polynomial time.
   \end{abstract}

\section{Introduction}

An important concept in various areas of computer science like automated deduction, first order logic, term rewriting, type checking, 
are terms (ranked trees), and also terms containing variables (see e.g.  \cite{baader-nipkow:98}).
 The basic and widely used algorithms in these areas are matching, unification, term rewriting, equational deduction,  asf.  
 For example, a term $f(g(a,b), c)$ may be rewritten into $f(g(b,a), c)$ by the commutativity axiom $g(x,y) = g(y,x)$
for $g$. 
 Since implemented systems often deal with large terms, perhaps generated ones, it is of high interest to look for 
 compression mechanisms for terms, and consequently, also investigate variants of the known algorithms that 
 also perform efficiently on the compressed terms without prior decompression. 
 
 The  device of straight line programs (SLP)  for compression of  strings    
 is a general one  and allows  analyses 
 of correctness and complexity of algorithms \cite{rytter:04,lohrey-overview:12}. SLPs are polynomially equivalent to the LZ77-variant  of Lempel-Ziv compression \cite{ziv-lempel:77}. 
 SLPs are non-cyclic context free grammars (CFGs), where every nonterminal has exactly one production in the CFG, such that any nonterminal  represents exactly one string.
 Basic algorithms are the equality check of two compressed strings, which requires polynomial time \cite{Plandowski:94} (see \cite{lifshits:07} for an efficient version
 and \cite{jez-matching:2012} for a proposal of a further improvement),
 and the compressed pattern match, i.e., given two SLP-compressed strings $s,t$, the question whether $s$ is a substring of $t$ 
  can also be solved in polynomial time
 in the size of the SLPs. 
 
 A generalization of SLPs for the compression of terms are singleton tree grammars (STG)
  \cite{schmidt-schauss:05-stg,levyvillaretschauss:06a,gascon-godoy-schmidt-schauss:08},   
 a specialization of straight line context free tree grammars 
 \cite{lohreymaneth:05,busatto-lohrey-maneth:08,lohrey-maneth-schauss:09,lohrey-maneth-schmidtschauss:12}, 
 where linear SLCF tree grammars are polynomially equivalent to STGs \cite{lohrey-maneth-schauss:09,lohrey-maneth-schmidtschauss:12}. 
 Basic notions for tree grammars and tree automata can be found in  \cite{tata:97}.
 Besides using the well-known node sharing, also partial subtrees (contexts) can be shared in the compression. 
The Plandowski-Lifshits equality test of nonterminals can be generalized to STGs and requires polynomial 
time \cite{lohreymaneth:05,schmidt-schauss:05-stg} in the size of the STG.

A naive generalization of the pattern match is to find a compressed ground term in another compressed ground term, which can be solved
by translating this problem into a pattern match of  compressed preorder traversals of the terms. 
A generalization of the pattern match is the following submatching problem (also called encompassment): 
given two (STG-compressed) terms $s,t$, where $s$ may contain variables,
is there an occurrence of an instance of $s$ in $t$?  A special case is matching, where the question is whether there is a substitution $\sigma$,
such that $\sigma(s) = t$, which is shown to be in PTIME in \cite{gascon-godoy-schmidt-schauss:08,gascon-godoy-schmidt-schauss:TCL:2011}, 
including the  computation of  the (unique) compressed substitution. 

\vspace{1pt}

 In this extended abstract (of \cite{schmidt-schauss-linear-prelim:13}) we report informally on progress in finding 
 algorithms operating on STGs for answering the submatching question, and which only operate on the STGs.
 We show that if $s$ is STG-compressed and linear, then submatching can be solved in polynomial time (Theorem \ref{thm:linear-matching-polynomial}).
 If $s$ is ground and compressed or $s$ is DAG-compressed, we describe less complex algorithms that solve the submatching question
  in polynomial time 
 (Theorem \ref{theorem:ground-compressed-submatch} and Theorem \ref{theorem:DAG-and-uncompressed-submatch}).  
In the general case, we describe a non-deterministic algorithm that runs in polynomial time. The deterministic algorithm runs in 
time $O(n^{c|\Varmult(s)|})$  (Theorem \ref{theorem:general-submatch}), where $n$ is the size
of the STG and $\Varmult(s)$ the set of variables occurring more than once in $s$. This is an  exponential-time algorithm, 
but in a well-behaved parameter. 

As an application and an easy consequence of the submatching algorithms, a (single-position or parallel) deduction step on compressed terms by 
a compressed left-linear rewriting rule 
can be performed in polynomial time.  We also show that a sequence of $n$  rewrites with a STG-compressed  left-linear term rewriting system
on an STG-compressed target term can be performed in polynomial time (see Theorem \ref{theorem:main-2}).
%
Our result confirms  results on  complexity of rewrite derivations under DAG-compression  \cite{avanzini-moser:2010}, 
namely that rewrite systems with a polynomial runtime complexity can be implemented such that the algorithm requires  polynomial time.
%

\begin{example}\label{example-intro} Consider the term rewriting rule  $f(x) \to g(x,b)$, and let the term $t_1 = f(f(f(a)))$ be compressed 
as $C_1 \to f(\cdot)$, $C_2 \to C_1C_1$,   $T \to C_2(T'), T' \to f(a)$.
A single term rewriting step on the compressed term $t_1$ by the rule $f(x) \to g(x,b)$ would produce $T' \to g(a,b)$, and hence
the reduced  and decompressed term is $f(f(g(a,b)))$. 
  Other rewriting steps on the compressed term that do not decompress the term have
  to analyze the contexts. 
%
%
Let another term be $t_2 = f^{16}(a)$, compressed   
as $C_1 \to f(\cdot)$, $C_2 \to C_1C_1$, $C_3 \to C_2C_2$, $C_4 \to C_3C_3$, $C_5 \to C_4C_4$, $T \to C_5(a)$.
A term rewriting step on $T$ using $f(x) \to g(x,b)$ may rewrite  the context $f(\cdot)$  
and thus would produce $C_1 \to g(\cdot,b)$, and hence
reduces the term in one blow to $g(\ldots, (g(\ldots,b) \ldots),b)$, which is a  parallel rewriting step, see  Section  \ref{sec:rewriting}.
%
\end{example}


The structure of this extended abstract (of \cite{schmidt-schauss-linear-prelim:13}) is as follows. 
First the basic notions, in particular STGs, are introduced in Section \ref{sec:prelim}. 
An algorithm  for  linear submatching is explained in Section \ref{sec:linear-submatch}.
In Section \ref{sec:other-cases} we explain submatching for some special cases and also a general 
non-deterministic algorithm for term submatching of compressed patterns and terms.
 Finally, in Section \ref{sec:rewriting}, we illustrate the application in term rewriting  and 
argue that $n$ rewrites for a left-linear TRS can be performed in polynomial time.


\section{Preliminaries}\label{sec:prelim}

We will use standard notation for signatures, terms, positions, and substitutions (see e.g. \cite{baader-nipkow:98}).
A position is a word over positive integers. For two positions $p_1,p_2$, we write  $p_1 \le p_2$, if $p_1$ is a prefix of $p_2$, and 
$p_1 < p_2$, if $p_1$ is a proper prefix of $p_2$. We call two strings $w_1,w_2$ {\em compatible}, if $w_1$ is a prefix of $w_2$, or $w_2$ is a prefix of $w_1$.
We write $p[i]$ for the $i^{th}$ symbol of $p$, where $0$ is the start index,
 and $p[i,j]$ for the substring 
of $p$ starting at $i$ ending at $j$.  
 The set of free variables in a term $t$ is denoted as $\FV(t)$. 
 Let $\Varmult(s)$ be the set of variables occurring more than once in $s$.
 Terms without occurrences of variables are called {\em ground}.
 A term where every variable occurs at most once is called {\em linear}. A {\em context} is a term with a single hole, denoted as $[\cdot]$. 
 Sometimes it is convenient to view a linear term containing one variable as a context, where the single variable represents the hole. 
 As a generalization, a {\em multicontext} is a linear term, where the variable occurrences are also called holes. 
Let $\mainp(c)$ be the position (as a string of numbers) of a hole in a context $c$, and let the {\em hole depth} be the length of $\mainp(c)$. 
If $c = c_1[c_2]$ for contexts $c,c_1,c_2$, then  $c_1$ is a {\em prefix context} of $c$ and $c_2$ is a {\em suffix context} of $c$.
The notation $c[s]$ means the term constructed from the context $c$ by replacing the hole with $s$.
An $n$-fold iteration of a context $c$ is denoted as $c^n$; for example $c^3$ is $c[c[c]]$.
A {\em substitution} $\sigma$ is a mapping on variables, extended
homomorphically to terms by $\sigma(f(t_1,\ldots,t_n)) = f(\sigma(t_1), \ldots, \sigma(t_n))$. 

\begin{definition}
A term rewriting system (TRS) ~$R$ is a finite set of pairs $\{(l_i, r_i)~|~i = 1,\ldots,n\}$, called {\em rewrite rules}, 
  written $\{l_i \to r_i\}$, 
where we assume that for all $i:  l_i$ is not a variable, and $\FV(r_i) \subseteq \FV(l_i)$. 

A  term rewriting step by $R$ is $t \xrightarrow{R} t'$, if for some $i$: $t = c[\sigma(l_i)]$  
and 
$t' = c[\sigma(r_i)]$ for some context $c$ and some substitution $\sigma$.

\end{definition}

\subsection{Tree Grammars for Compression}

First we introduce string compression: A {\em straight line program} (SLP) is a context-free grammar that generates one word, has no cycles, 
and for every nonterminal $A$ there is exactly one production of the form $A \to A_1A_2$  or $A \to a$.

An application for SLPs is the representation of compressed positions in compressed terms.
We will use the well-known (polynomial-time) algorithms, constructions and their complexities on SLPs
like equality check of compressed strings, computing prefixes, suffixes, the common prefix (suffix) of two 
strings 
 (see \cite{rytter:04,gasieniec-karpinski-plandowski-rytter-LZ:96,Plandowski:94,Plandowski-Rytter:99,karpinski:95,lifshits:07,levy-schmidt-schauss-villaret:08}). 

We consider compression of terms using tree grammars:
\begin{definition}\label{def-STG} A {\em singleton tree grammar (STG)}  is
a 4-tuple $G = ({\cal T\cal N},{\cal C\cal N},\Sigma,{\cal R})$, where
${\cal T\cal N}$ are tree/term nonterminals
of arity $0$,
${\cal C\cal N}$ are context nonterminals of arity $1$,
and $\Sigma$ is a signature of function symbols (the terminals),
such that the sets ${\cal T\cal N}$, ${\cal C\cal N}$,  
and $\Sigma$ are finite and pairwise disjoint.
The set of nonterminals ${\cal N}$ is defined as
${\cal N} =  {\cal T\cal N} \cup  {\cal C\cal N}$.
The productions in ${\cal R}$ must be of the form:
\begin{itemize}
\item $A \to   f(A_1, \ldots, A_m)$, where $A,A_i \in {\cal T\cal N}$,
and $f \in \Sigma$ is an $m$-ary terminal symbol.
\item $A \to C_1A_2$ where $A,A_2\in {\cal T\cal N}$, and $C_1 \in {\cal C\cal N}$.
\item $C \to [\cdot]$ where $C \in {\cal C\cal N}$.
\item $C \to C_1C_2$, where $C,C_1,C_2 \in {\cal C\cal N}$.
\item  $C  \to f(A_1, \ldots,A_{i-1},[\cdot], A_{i+1},\ldots, A_m)$, where
$A_1,\ldots,A_{i-1},A_{i+1},\ldots,A_m \in {\cal T\cal N}$,
$C  \in {\cal C\cal N}$,
and $f \in \Sigma$ is an $m$-ary terminal symbol.
\item $A\to A_1$  ($\lambda$-production), where $A$ and $A_1$ are   
term nonterminals.
\end{itemize}
Let  $N_1 >_G N_2$ for two
nonterminals $N_1,N_2$, iff $(N_1 \to t) \in R$, and $N_2$ occurs in
$t$.  The STG must be non-cyclic, i.e. the transitive
closure $>^+_G$ must be irreflexive.
Furthermore, for every
nonterminal $N$ of $G$ there is exactly one production having $N$ as
left-hand side. 
Given a term $t$ with occurrences of nonterminals, the
derivation of $t$ by $G$ is an exhaustive iterated replacement of the
nonterminals by the corresponding right-hand sides. The result is denoted as
$\val_G(t)$. 
We will write $\val(t)$ when $G$ is clear from the context.
In the case of a nonterminal $N$  of $G$, we also say that $N$ (or $G$)
{\em generates} $\val_G(N)$ or {\em compresses} $\val_G(N)$. 
%
The depth of a nonterminal $N$ is the maximal number of $>_G$-steps starting from $N$, and the depth of $G$ is the maximal
depth of all its nonterminals. 
The size of an STG is the number of its productions, denoted as $|G|$.
\end{definition}

\begin{definition} Let $G$ be an STG and $V$ be a set of variables. Then $(G,V)$ is an {\em STG with variables},  
where additional production forms  are permitted:
\begin{itemize}
\item $A \to   x$, where $A  \in {\cal T\cal N}$ and $x \in V$.
\item $x\to A$  ($\lambda$-production), where $x \in V$ and $A \in {\cal T\cal N}$ .
\end{itemize}
This means that variables may be terminals or nonterminals, depending on the existing productions.
The measure  $\Vdepth(N,V)$  is defined as
 the maximal number of $>_G$-steps starting from $N$ until an element of $V$ or a terminal is reached,
 and $\Vdepth(G,V)$ the maximum.
 
 In the following we always mean STG with variables if variables are present. 
 
  An STG $G$ is called a {\em DAG}, if there are no context nonterminals. \qed
\end{definition}

The compression rate may be exponential in the best case, but not larger: The size of terms represented with an STG $G$ 
is at most $O(2^{|G|})$. 
 Note that the term depth of DAG-compressed  terms is at most the size of the DAG, whereas the term depth of STG-compressed
terms may be exponential in the size of the STG. Note also that every subterm in a DAG-compressed term is represented by a nonterminal,
 whereas in STG-compressed terms, there may be subterms that are only implicitly represented.
It is known that several computations in SLPs and STG, for example length computations,  can be done in polynomial time.
 Several forms of extensions  of STGs are well-behaved, such that even a sequence of $n$ such extensions will lead to only polynomial size growth.

\vspace{1mm}

 {\bf Compressed Matching.}
The investigation in \cite{gascon-godoy-schmidt-schauss:08} shows that (exact) term matching, also in
the fully compressed version including the computation of a compressed
substitution, is polynomial. I.e.\ given two nonterminals $S,T$, where $S$ may contain variables,   
 there is a polynomial time algorithm for answering 
the question 
whether there is some substitution $\sigma$
such that $\sigma(\val(S)) = \val(T)$, and also for computing the substitution, where the representation is 
 a list of variable-nonterminal pairs, and the nonterminals belong to an extension of the input STG.  
 
 \vspace{1mm}
 
 {\bf Compressed Submatching.}
  Given two first-order terms $s,t$, where $s$ (the pattern) may contain variables, 
   the submatching problem is to identify an instance of $s$ as a subterm of $t$.
   Submatching (also called encompassment relation) is a prerequisite for term rewriting. 
 
 \begin{definition}
The {\em compressed term submatching  problem} is: \\
Assume given a term $s$ which may contain variables,   
and a (ground)  term $t$, both compressed with an STG $G = G_S\ \cup\ G_T$, such that
$\val(T) = t$ and $\val(S) = s$ for term nonterminals $S \in G_S$, $T \in G_T$. 
The task is to
compute a (compressed) substitution $\sigma$ such that $\sigma(s)$ is a subterm of $t$;
also the (compressed) position (all positions) $p$ of the match in $t$ should be computed.
Specializations  are:{\em uncompressed} if $s$ is given as a plain term without any compression;   
{\em ground} if $s$ is ground;
{\em DAG-compressed}, if $s$ is DAG-compressed; and
{\em linear}, if $s$ is a linear term, i.e.\ every variable occurs at most once in $s$.  
%
%
\end{definition}

\begin{lemma}\label{lem:submatch-nonground-simple} Given an STG $G$, a term $s$ and a nonterminal
$T$, with  $\val_G(T) = t$, where $t$ is ground. 
If there is some substitution $\sigma$, such that $\sigma(s)$ is a subterm of  $t$, then 
there are the following possibilities: 
\begin{enumerate}
  \item\label{lem:submatch-nonground-simple-1} There is a term nonterminal $B$ of $G$  such that $\val_G(B) = \sigma(s)$.
  \item\label{lem:submatch-nonground-simple-2} There is a   production $B \to CB'$ in $G$, such that $\sigma(s) = c[\val_G(B')]$, 
       where $c$ is a nontrivial suffix context of $\val_G(C)$. 
     There are subcases for the hole position $p$ of $c$. 
    \begin{enumerate}
          \item \label{lem:submatch-nonground-simple-2a}(overlap case) $p$ is a position in $s$. 
          \item\label{lem:submatch-nonground-simple-2b}  $p = p_1p_2$, where $p_1$ is the maximal prefix of $p$ that is also a position in $s$.
               Then $s_{|p_1} = x$ is a variable. 
                The algorithms below have to distinguish the {\em subterm case} where $x$ occurs more than once in $s$ and the {\em subcontext case}
                   where  $x$ occurs exactly once in $s$.
%
                \end{enumerate} 
\end{enumerate}   
\end{lemma}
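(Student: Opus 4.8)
The plan is to locate the given occurrence of $\sigma(s)$ inside $t=\val_G(T)$ and trace it down through the derivation of $T$. By hypothesis we may fix a position $q$ with $t_{|q}=\sigma(s)$. Consider the derivation tree of $T$ under $G$: every node is (an occurrence of) a nonterminal, a term nonterminal occupying a subtree region of $t$ and a context nonterminal a context region (a subtree with one sub-subtree removed). Let $R$ be the subtree region occupied by this occurrence of $\sigma(s)$. Since subtree regions are laminar (any two are nested or disjoint), the term-nonterminal nodes whose region contains $R$ form a chain in the derivation tree, headed by the root $T$; let $B$ be the nonterminal labelling the \emph{lowest} node of that chain. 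Note $\val_G(B)$ is ground (its region is a subtree of the ground term $t$), so $B$'s production is either $B\to f(A_1,\dots,A_m)$ or $B\to C B'$: a $\lambda$-production $B\to A_1$ cannot label the lowest node, since its child $A_1$ has the same region and would also qualify, and $B\to x$ is impossible by groundness.

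Next I would case-split on the production of $B$. If $B\to f(A_1,\dots,A_m)$, then $R$ either equals the whole region of $B$, giving $\val_G(B)=\sigma(s)$ and case~(\ref{lem:submatch-nonground-simple-1}), or $R$ lies strictly below the root, hence inside the region of some $A_i$, contradicting minimality of $B$. If $B\to C B'$, then $\val_G(B)=\val_G(C)[\val_G(B')]$; write $p_C=\mainp(\val_G(C))$ for the hole position, so the region of $B'$ is the subtree at $p_C$ and the region of $C$ is the frozen part. Let $p'$ be the position of the root of the $\sigma(s)$-occurrence within $\val_G(B)$. Then: $p'=\varepsilon$ gives $\sigma(s)=\val_G(B)$ and case~(\ref{lem:submatch-nonground-simple-1}); $p_C\le p'$ puts $R$ inside the region of $B'$, contradicting minimality; $\varepsilon<p'<p_C$ (a proper nonempty prefix) gives $\sigma(s)=\val_G(B)_{|p'}=c[\val_G(B')]$ with $c=\val_G(C)_{|p'}$ a nontrivial suffix context of $\val_G(C)$ (nontrivial because $p'\neq p_C$), which is exactly case~(\ref{lem:submatch-nonground-simple-2}); and the remaining possibility, $p'$ incomparable with $p_C$ (so $R$ sits in the frozen part of $\val_G(C)$), will be excluded, so that we always land in case~(\ref{lem:submatch-nonground-simple-1}) or case~(\ref{lem:submatch-nonground-simple-2}).

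To exclude that last subcase I would show, by well-founded induction along $>_G$, that a subtree region lying in the frozen part of a context value $\val_G(C')$ is, after finitely many expansion steps, situated strictly inside the region of a term nonterminal $A_j$ occurring in a production $C''\to f(\dots,A_j,\dots,[\cdot],\dots)$: the production $C'\to[\cdot]$ cannot occur (empty frozen part), a production $C'\to C_1'C_2'$ forces the region into the frozen part of $C_1'$ or of $C_2'$ and we recurse on the smaller nonterminal, and a production $C'\to f(\dots,[\cdot],\dots)$ places the region below some $A_j$ because it does not contain the hole. Such an $A_j$ would then be a term-nonterminal node below $B$ whose region contains $R$, contradicting the choice of $B$. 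Keeping the hole position straight under the context-composition rule $C'\to C_1'C_2'$ (where it becomes $\mainp(\val_G(C_1'))\cdot\mainp(\val_G(C_2'))$) is the only delicate point; the rest of this step is routine.

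Finally, for the refinement~(\ref{lem:submatch-nonground-simple-2a})--(\ref{lem:submatch-nonground-simple-2b}) in case~(\ref{lem:submatch-nonground-simple-2}), let $p=\mainp(c)$, a position of the term $\sigma(s)$, and let $p_1$ be the longest prefix of $p$ that is also a position of $s$ (possibly $p_1=p$). If $p_1=p$ we are in the overlap case~(\ref{lem:submatch-nonground-simple-2a}). Otherwise $p_1$ is a position of $s$ whose one-symbol extension towards $p$ is not; as $s$ is a term, $s_{|p_1}$ must then be a leaf, and it cannot be a constant (else $p=p_1$), so $s_{|p_1}=x$ is a variable and $p=p_1p_2$ with $p_2$ a nonempty position of $\sigma(x)$, which is case~(\ref{lem:submatch-nonground-simple-2b}); the further split into the subterm case and the subcontext case is then simply according to whether $x\in\Varmult(s)$ or $x$ occurs exactly once in $s$. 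The main obstacle in the whole argument is the bookkeeping in the excluded ``frozen part'' subcase — making the descent through nested context nonterminals precise together with the position arithmetic relative to the moving hole; the case analysis on $B$'s production and the reading off of the subcases are then straightforward.
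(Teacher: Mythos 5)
The paper is an extended abstract and states Lemma~\ref{lem:submatch-nonground-simple} without any proof, so there is no paper proof to compare against; judged on its own, your argument is correct and is precisely the descent-through-the-derivation argument the lemma's case split anticipates: choose a minimal term-nonterminal occurrence $B$ whose region covers the occurrence of $\sigma(s)$, case-split on $B$'s production, exclude the ``frozen part'' of $\val_G(C)$ by well-founded descent through the context productions (which would yield a smaller term nonterminal covering the occurrence, contradicting minimality), and derive subcases (\ref{lem:submatch-nonground-simple-2a})/(\ref{lem:submatch-nonground-simple-2b}) from the maximal prefix $p_1$ of $p$ that is a position of $s$. Two minor points: the ``chain'' claim is more than you need --- a node of the finite derivation tree that is minimal among those whose region contains $R$ suffices, which spares you justifying that incomparable derivation-tree nodes have disjoint regions; and $B\to x$ is not always excluded by groundness alone, since in an STG with variables $x$ may have a $\lambda$-production $x\to A$ with $\val_G(A)$ ground, but in that case $A$ is a term nonterminal with the same region, so the same minimality argument that rules out $B\to A_1$ disposes of it.
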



%
%

\section{Term Submatching with Linear Terms}\label{sec:linear-submatch}

  {\bf Overlaps of Linear Terms and Contexts.}
%
 %
  An important concept and technique used is periodicity of contexts. This is a generalization of periodicity of strings: 
for example the string ``bcabcabc" is periodic with period length $3$. A context $c$ is called periodic if $c = d^nd'$ for some  contexts $d,d'$
and a positive integer $n$, where $d'$ is a prefix of $d$. 
This is even generalized to multicontexts $c$ (linear terms, where the variables are the holes),
and where periodicity means that $c$ can be overlapped with itself at periodic positions without conflicts. 
%
%
%

We consider  overlapping multicontexts $c,c_1,c_2,\ldots$ and a context $d$.  
In particular special variants of overlaps have to be analyzed: Overlaps 
where the hole of $d$ is not compatible with any hole of $c$. 
The overlaps where a hole of $c$ is compatible with a hole of $d$ can be dealt with generalizing results from words (or words with character-holes).
If there are non-compatible overlaps of copies of $c$ with $d$, then only two configurations are possible:  parallel and  sequential
(see Proposition \ref{proposition:non-compatible-overlaps}   and Fig. \ref{fig:three-overlaps}), and there are no mixed configurations.  
Thus, 
periodicities in linear terms are not only possible along the hole-path of $d$ but also 
along other paths, and there are two different kinds of such periodicities: the parallel and the sequential variant.
A helpful technical result is a periodicity theorem that tells us that a multi-context $c$ is periodic, if there is a  multiple overlap 
of $h+2$ copies of $c$ where $h$ is the number of holes, and the overlap is sufficiently dense.
This will be used in the submatching algorithm for linear terms.

\begin{example}
Let $d = f(a_1, f([\cdot], a_1))$ and let $c = f(a_1,[\cdot])$. Then $c$ overlaps $d$ at position $\varepsilon$, which is a compatible overlap,
since the start as well as the hole position of $c$ is on the hole path of $d$. The overlap of  $c$ with $d$ at position $2$ (in $d$) is a non-compatible
overlap, since the hole of $c$ is at $2.2$, which is not a prefix or suffix of the hole path of $d$, which is $2.1$.  
\end{example}


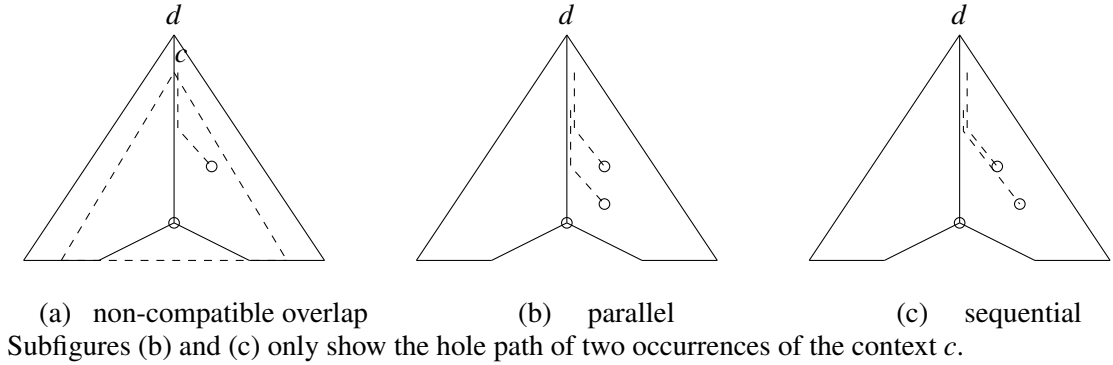
\begin{figure}[t]
\begin{tabular}{p{0.3\textwidth}p{0.3\textwidth}p{0.3\textwidth}}
\begin{minipage}{0.25\textwidth}
 \begin{tikzpicture}[auto]
     \draw (2,0.5) circle (2pt);
 \draw[-] (0,0) --   (2,3) ; \draw[-] (2,3) --  node[at start,above]{$d$}   (4,0) ; \draw[-] (0,0) --   (1,0) ;\draw[-] (1,0) --   (2,0.5) ;
 \draw[-] (2,0.5) --   (3,0) ; \draw[-] (3,0) --   (4,0) ;  
 \draw[-] (2,0.5) --   (2,3)   ;  
 \draw[dashed] (0.5,0) --   (2,2.5) ; \draw[dashed] (2,2.5) -- node[at start,above]{~~$c$}   (3.5,0) ; 
  \draw[dashed] (0.5,0) --   (3.5,0);  \draw[dashed] (2.05,2.5) --   (2.05,1.75) -- (2.5,1.25);   \draw (2.5,1.25) circle (2pt);
  \end{tikzpicture}
\end{minipage}
& 
\begin{minipage}{0.25\textwidth}
 \begin{tikzpicture}[auto]
     \draw (2,0.5) circle (2pt);
 \draw[-] (0,0) --   (2,3) ; \draw[-] (2,3) --  node[at start,above]{$d$}   (4,0) ; \draw[-] (0,0) --   (1,0) ;\draw[-] (1,0) --   (2,0.5) ;
 \draw[-] (2,0.5) --   (3,0) ; \draw[-] (3,0) --   (4,0) ;  
 \draw[-] (2,0.5) --   (2,3)   ;  
       \draw[dashed] (2.1,2.5) --     (2.1,1.75) -- (2.5,1.25);   \draw (2.5,1.25) circle (2pt);
          \draw[dashed] (2.05,2) --   (2.05,1.25) -- (2.5,0.75);   \draw (2.5,0.75) circle (2pt);
          
  \end{tikzpicture}
\end{minipage}
 &
 \begin{minipage}{0.25\textwidth}
 \begin{tikzpicture}[auto]
     \draw (2,0.5) circle (2pt);
 \draw[-] (0,0) --   (2,3) ; \draw[-] (2,3) --  node[at start,above]{$d$}   (4,0) ; \draw[-] (0,0) --   (1,0) ;\draw[-] (1,0) --   (2,0.5) ;
 \draw[-] (2,0.5) --   (3,0) ; \draw[-] (3,0) --   (4,0) ;  
 \draw[-] (2,0.5) --   (2,3)   ;  
       \draw[dashed] (2.1,2.5) --     (2.1,1.75) -- (2.5,1.25);   \draw (2.5,1.25) circle (2pt);
          \draw[dashed] (2.05,2) --   (2.05,1.72) -- (2.8,0.75);   \draw (2.8,0.75) circle (2pt);
          
  \end{tikzpicture}
\end{minipage}
\\
\quad \\
\multicolumn{1}{c}{(a)~~~non-compatible overlap} &
\multicolumn{1}{c}{(b)~~~~~parallel} &
\multicolumn{1}{c}{(c)~~~~~sequential}   
\end{tabular}
Subfigures (b) and (c) only show the hole path of two occurrences of the context $c$.
\caption{Non-compatible, parallel and sequential overlap of $c$ with $d$}\label{fig:three-overlaps}   
\end{figure}

\begin{proposition}\label{proposition:non-compatible-overlaps}  
Let $c$ be a multicontext with at least one hole, and let $d$ be a context with exactly one hole, and let $p_1 <  p_2$ be two positions
of non-compatible overlaps 
of $c$ in $d$.  Let $q_i$ be the maximal common hole path (mchp) of $c$ at $p_i$ for $i = 1,2$.
Then there are the following two cases (see Figure \ref{fig:three-overlaps}):
\begin{enumerate}
  \item $q_1 = q_2$ (the {\em parallel overlap} case).  Then for $p'$ such that $p_1p' = p_2$ the path $p_1(p')^n$ is compatible with 
  $\mainp(d)$ for all $n$.
  Also, this is a multiple overlap of $c'$   with itself at positions $(p')^i$, where $c'$ is constructed from $c$ with an extra hole at $p''$, 
   where $p_1p'' = \mainp(d)$.
   \item $q_2 < q_1$ (the {\em sequential overlap} case).    Then $p_2q_2 = p_1q_1$.   I.e., there is a fixed position on the hole path of $d$, where
   the  hole paths of occurrences of $c$ deviate. 
\end{enumerate} 
\end{proposition}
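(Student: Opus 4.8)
The plan is to argue by a careful case analysis on how the two non-compatible overlaps of $c$ in $d$ interact with the (unique) hole path $\mainp(d)$. Fix $p_1 < p_2$ and the two maximal common hole paths $q_1, q_2$, where $q_i$ is the longest prefix of $\mainp(d)$ that, when $p_i$ is stripped off the front, is still a hole path into the copy of $c$ placed at $p_i$. First I would observe the basic geometric fact: since both copies of $c$ are embedded in the single-hole context $d$, along the hole path of $d$ everything above the hole is a rigid ``spine'' of terminals, so any terminal symbol that a copy of $c$ places on that spine is forced, and two copies can only diverge from the spine by branching off into a non-hole argument position. In particular $q_1$ and $q_2$ are both prefixes of $\mainp(d)$ and hence compatible; since $p_1 < p_2$, one checks that $p_1 q_1$ and $p_2 q_2$ both lie on $\mainp(d)$, which already makes the two numbers $|p_1| + |q_1|$ and $|p_2| + |q_2|$ comparable. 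The dichotomy $q_1 = q_2$ versus $q_2 < q_1$ (the remaining possibility $q_1 < q_2$ will be ruled out) is then the natural split.

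In the \emph{parallel} case $q_1 = q_2$: here the two copies of $c$ branch off the spine of $d$ at the \emph{same} depth below their respective roots. Writing $p_1 p' = p_2$, I would show that $p'$ is itself a hole path step pattern inside $c$, i.e.\ that shifting $c$ by $p'$ lands it back on top of itself consistently. The argument is the standard string-periodicity style argument lifted to contexts: on the overlap region the symbols of the first copy of $c$ and the second copy of $c$ must agree, and because both branch off $\mainp(d)$ at positions differing exactly by $p'$ along a common direction, iterating the shift gives that $p_1(p')^n$ stays compatible with $\mainp(d)$ for every $n$ (as long as we have not run off the end of $d$, but the statement is about compatibility of paths, which is monotone). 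Introducing the auxiliary multicontext $c'$ — which is $c$ together with one extra hole punched at the position $p''$ where $p_1 p'' = \mainp(d)$ — packages exactly this: the iterated self-overlap of $c'$ at the positions $(p')^i$ records that the extra hole lines up with the branch-off point at every period. This is the multicontext self-overlap (``parallel periodicity'') advertised in the surrounding text.

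In the \emph{sequential} case $q_2 < q_1$: now the second, deeper copy of $c$ peels off the spine of $d$ strictly \emph{earlier} (measured along $\mainp(d)$) than the first copy does — relative to $d$'s root, $p_2 q_2$ is where the deeper copy leaves the spine and $p_1 q_1$ is where the shallower one leaves it. The key claim to prove is $p_2 q_2 = p_1 q_1$, i.e.\ both copies in fact leave the spine at the \emph{same absolute node} of $d$. This I would get by contradiction: if $p_2 q_2 \neq p_1 q_1$ then, since both are prefixes of $\mainp(d)$, one is a proper prefix of the other; suppose (the other subcase is symmetric) $p_2 q_2 < p_1 q_1 \le \mainp(d)$. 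Then at the node $p_1 q_1$ of $d$ the first copy of $c$ still has a hole-path symbol on the spine, while the second copy of $c$, having already branched off at the shallower node $p_2 q_2$, dictates a \emph{non-hole} (hence terminal) subterm there — but the overlap forces these to agree, contradicting maximality of $q_2$ (the second copy could have continued further along the spine), or, in the other orientation, contradicting maximality of $q_1$. Hence $p_2 q_2 = p_1 q_1$, which is precisely the statement that there is a fixed node on $\mainp(d)$ at which the hole paths of all overlapping copies of $c$ deviate.

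The step I expect to be the main obstacle is pinning down the sequential case rigorously — in particular ruling out the ``mixed'' configuration $q_1 < q_2$ and, within the sequential case, showing the deviation node is shared rather than merely ordered. The subtlety is bookkeeping: $q_i$ is a path \emph{into a copy of $c$}, so one must consistently translate between positions relative to $d$'s root and positions relative to each copy's root, and keep track of which symbols are terminals forced by being on $\mainp(d)$ versus symbols that $c$ itself supplies. Once the ``spine forcing'' observation is stated cleanly (on $\mainp(d)$ the symbol at each node is uniquely determined, and a copy of $c$ can only leave the spine at a node where $c$'s own structure at the corresponding local position is not a hole), both cases reduce to short prefix/maximality arguments, and the remaining routine verifications — that $p'$ divides the relevant path length, that iterating stays compatible, that $c'$'s self-overlaps are exactly at $(p')^i$ — follow by induction on $n$.
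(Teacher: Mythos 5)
This extended abstract states Proposition~\ref{proposition:non-compatible-overlaps} without proof (the proofs are deferred to the cited full version), so I can only judge your proposal on its own merits. Your overall shape --- reduce everything to where each copy of $c$ leaves the spine $\mainp(d)$, then do a case analysis --- is the right instinct, but the central step is missing. What is automatic is that the \emph{absolute} deviation nodes $p_1q_1$ and $p_2q_2$ are comparable (both are prefixes of $\mainp(d)$); the paths $q_1$ and $q_2$ themselves are measured from the roots of two different copies of $c$ and are not a priori prefix-comparable at all. The proposition's actual content is that the three absolute configurations collapse to exactly two relative ones: $p_1q_1<p_2q_2$ forces $q_1=q_2$; $p_1q_1=p_2q_2$ forces $q_2<q_1$ (note $p_1q_1=p_2q_2$ only gives $q_1=p'q_2$ where $p_1p'=p_2$, and $q_2\le p'q_2$ is itself a periodicity fact that needs proof); and $p_2q_2<p_1q_1$ is impossible. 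You assert this trichotomy collapse but prove none of the three implications.

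The one concrete argument you do give --- the contradiction in the sequential case --- does not go through. You argue that where the deeper copy has already branched off, it ``dictates a non-hole subterm'' on the spine, while the shallower copy still has a hole path there, and that ``the overlap forces these to agree, contradicting maximality of $q_2$.'' But a hole (or hole path) of one occurrence of $c$ lying over a terminal region of another occurrence is not a conflict: holes of a multicontext may be instantiated with arbitrary terms, so nothing forces the hole-path structures of the two copies to agree pointwise, and the maximality of $q_i$ is an intrinsic property of $c$ relative to the direction $p_i^{-1}\mainp(d)$ that cannot be ``extended'' by what the other copy places there. Relatedly, your ``spine forcing'' lemma constrains only the \emph{labels} along $\mainp(d)$, whereas $q_1,q_2$ are determined by where the hole paths of $c$ branch \emph{off} the spine --- information carried by the subtrees of $c$ hanging off the spine and their instantiations. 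The missing engine is to exploit that both occurrences are instances of the same $c$ embedded consistently in $d$, i.e., a self-overlap of $c$ at offset $p'$, and to compare the off-spine subtrees $\sigma_1(c|_{p'vj})$ and $\sigma_2(c|_{vj})$ (both equal to the corresponding subterm of $d$) in a Fine--Wilf-style argument; without that step neither the exclusion of the mixed configuration nor the exact equality $q_1=q_2$ in the parallel case follows. (Minor: ``$q_1$ and $q_2$ are both prefixes of $\mainp(d)$'' should read $p_1q_1$ and $p_2q_2$; and compatibility of $p_1(p')^n$ with $\mainp(d)$ is not ``monotone'' in $n$ --- it needs an induction up to the first $n$ at which $|p_1(p')^n|\ge|\mainp(d)|$.)
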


\begin{example}\label{example-overlaps}
 Let $c' = f(f(a_1,a_2),[\cdot])$ be a context, $c = f(f(x,y), (c')^{100}[.])$,
and let $d = (c')^{100}[\cdot]$. Then there is an overlap of $c$ with $d$ at positions $\varepsilon,2, 2.2, \ldots$.
It is an overlap of the first kind, i.e. a parallel overlap.
%
A sequential overlap is the following: Let $c = f(a_1, f(a_1, f(a_1,[\cdot])))$ and let 
$d = f(a_1, f(a_1, f(a_1,f([\cdot],f(a_1,f(a_1,a_1))))))$. 
Then the overlap positions are $\varepsilon, 2, 2.2, 2.2.2$.  
\end{example}
%

 \begin{theorem}[Periodicity-Theorem]\label{thm:context-overlaps-imply-periodicity}
 Let $c$ be a multi-context with $h \ge 1$ holes. Let $p$ be the position of a fixed hole of $c$, 
  and let $p_i, i = 1,\ldots,n$ be prefixes of $p$ such that $i < j$ implies $p_i < p_j$ with $n \ge h+2$.
 Assume that there is a (right-cut) overlap of $n$ copies of $c$ starting at position $p_i$ such that $p$ is a prefix of $p_ip$, i.e.,
 the hole position of $c$ starting at $p_i$ is compatible with $p$ for all $i$, and only  positions in $c$ at $p_1$ are relevant for the overlap.
Let $p_{\mathit{max}}$ be $\max\{|p_{i+1}|-|p_i|~|~i = 1,\ldots,n-1\}$. 
Assume $|p|-|p_{n}| \ge  2h\cdot p_{\mathit{max}}$;   
this means
 there are $2h\cdot{}p_{\mathit{max}}$
  common positions on the path $p$ of all occurrences of $c$. \\
Then the multicontext $c$ is periodic (in the direction $p$), and a period  length is 
$p_{\mathit{all}} : =\mathit{gcd}(\mbox{$|p_2|-|p_1|$}, \mbox{$|p_3|-|p_2|$}, \ldots, \mbox{$|p_{n}|-|p_{n-1}|)$}$. 
Moreover, the overlap is consistent with using the same substitution for the variables for every  occurrence of $c$.    
 \end{theorem}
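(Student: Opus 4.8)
The plan is to reduce the multicontext periodicity statement to a one‑dimensional (string) periodicity argument applied along the distinguished path $p$, and to handle the ``sideways'' parts of $c$ (the subterms hanging off the path, including the other $h-1$ holes) by a counting/pigeonhole argument using the hypothesis $n \ge h+2$. Concretely, first I would set $p_{\mathit{all}} := \gcd(|p_2|-|p_1|,\ldots,|p_n|-|p_{n-1}|)$ and note that by a standard Bézout/Fine–Wilf style argument it suffices to prove periodicity for each individual shift $|p_{i+1}|-|p_i|$ and then combine: if $c$ agrees with its shift by $d_i$ and by $d_j$ on a long enough common stretch, it agrees with its shift by $\gcd(d_i,d_j)$. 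The quantitative slack $|p|-|p_n|\ge 2h\cdot p_{\mathit{max}}$ is exactly what guarantees ``long enough'' at every step, so the combination goes through.

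**Next**, for a single shift I would argue as follows. Fix two consecutive overlap positions $p_i < p_{i+1}$ with $p' $ defined by $p_i p' = p_{i+1}$; the overlap of the copy of $c$ at $p_i$ with the copy at $p_{i+1}$ is, by hypothesis, a right‑cut overlap whose hole paths are both compatible with $p$, so along the segment of $p$ between $p_{i+1}$ and $p$ the two copies must agree symbol‑by‑symbol, which is precisely string‑periodicity with period $|p'|$ of the ``spine'' word of $c$ read along $p$. So the spine of $c$ is periodic with period $p_{\mathit{all}}$ in the direction $p$ — this is the easy, one‑dimensional core. The content of the theorem is that this spine‑periodicity propagates to the whole multicontext: the subterms dangling off each spine node (and the placement of the remaining holes) must also repeat with period $p_{\mathit{all}}$.

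**For that propagation step** I would use Proposition~\ref{proposition:non-compatible-overlaps}: for each of the $n-1$ consecutive pairs, the overlap is either parallel or sequential, and there are no mixed configurations. A hole of $c$ off the spine can ``see'' at most one sequential‑type deviation point on $p$ per overlap pair; since we have $n-1 \ge h+1$ pairs but only $h$ holes, by pigeonhole at least one overlap pair is purely parallel for \emph{all} holes simultaneously — more carefully, one argues that for the $\gcd$ shift the overlap is forced to be parallel, because a sequential deviation would pin a hole of $c$ to a fixed absolute position on $p$, and that position can be shared by at most finitely many (in fact, controlled by $h$) of the shifted copies, contradicting $n\ge h+2$ once the shifts are incommensurate enough after taking the $\gcd$. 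In the parallel case, the extra-hole construction $c'$ from Proposition~\ref{proposition:non-compatible-overlaps}(1) turns the whole multicontext (with a synthetic hole placed at $\mainp(d)$ relative to $p_i$) into something that overlaps itself at positions $(p')^j$, and a multicontext self‑overlap at a geometric sequence of positions with enough common stretch is periodic by the string argument applied in each direction — giving $c = e^m e'$ for a context $e$ of hole‑depth $p_{\mathit{all}}$ along $p$, $e'$ a prefix of $e$. Finally, consistency of the substitution for the variables is immediate: each variable/hole occurrence lies in a unique period block of $e$, and the overlap hypothesis forces the corresponding blocks to be literally equal as multicontexts, so the same assignment works for every copy.

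**The main obstacle** I anticipate is the off‑spine pigeonhole, i.e.\ proving rigorously that the $\gcd$ shift yields a \emph{parallel} overlap for every hole at once rather than just for the spine: one must track, for each of the $h$ holes, the ``deviation position'' on $p$ across the $n-1$ overlap pairs, show these are few (at most one per hole essentially, by the no‑mixed‑configuration dichotomy of Proposition~\ref{proposition:non-compatible-overlaps}), and then use $n-1 \ge h+1$ together with the density bound $2h\cdot p_{\mathit{max}}$ to conclude that after passing to the $\gcd$ there is a common overlap witnessing full periodicity. The bookkeeping of which positions are ``relevant'' (the hypothesis restricts relevant positions to those of $c$ at $p_1$, i.e.\ to the part of each copy that is actually covered) is the delicate part and is where the precise constant $h+2$ and the factor $2h$ in the density assumption are consumed.
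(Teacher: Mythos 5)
First, a caveat: this document is an extended abstract and states the Periodicity-Theorem without proof (the proof is deferred to the full version \cite{schmidt-schauss-linear-prelim:13}), so there is no in-paper argument to compare yours against; I can only judge your plan on its own terms.

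On its own terms, your proposal has a genuine gap, and it sits exactly where you yourself locate ``the main obstacle''. The spine part is fine: reading $c$ along $p$ as a string, the pairwise shifts $|p_{i+1}|-|p_i|$ give string periods, and the slack $|p|-|p_n|\ge 2h\cdot p_{\mathit{max}}$ is ample for a Fine--Wilf combination down to $p_{\mathit{all}}$. But the whole content of the theorem is the propagation to the off-path subterms, the remaining $h-1$ holes, and the claimed consistency of a single substitution, and your argument for that step does not go through as sketched. (i) Proposition~\ref{proposition:non-compatible-overlaps} is not directly applicable: it classifies \emph{two} non-compatible occurrences of $c$ inside a context $d$ with exactly one hole, whereas here you need to control how two copies of the multicontext $c$, overlapped compatibly along the distinguished hole path, constrain each other's off-spine material; some analogue must be formulated and proved, not cited. (ii) The pigeonhole ``$h$ holes versus $n-1\ge h+1$ pairs, so some pair is parallel for all holes'' tacitly assumes each hole can be responsible for a sequential deviation in at most one consecutive pair; but in the sequential case the deviation point is an absolute position ($p_2q_2=p_1q_1$), and since the base positions $p_i$ differ, one and the same hole can deviate at different absolute positions for different pairs, so the count is not justified. (iii) Even granting one all-parallel pair, that yields periodicity with period equal to that pair's gap $|p_{i+1}|-|p_i|$, not with period $p_{\mathit{all}}=\gcd(\ldots)$; transferring the gcd from the spine word to the full multicontext (including equality of the hole fillings, i.e.\ the ``same substitution'' clause) is precisely the two-dimensional step for which the constants $h+2$ and $2h\cdot p_{\mathit{max}}$ are designed, and it is missing. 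As written, the proposal is a plausible plan plus an acknowledged unproved core, so it cannot be accepted as a proof of the theorem.
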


 {\bf Tabling Prefixes of Multicontexts in Contexts.}   

The core of the algorithm for finding  submatches of a linear term $s$ in other terms (under STG-compression) 
is the construction of a table in dynamic-programming style. The table contains overlaps of $s$ with contexts that are explicitly  represented 
in the STG $G$ by a context nonterminal. 
In fact the table is split into several tables: There is a table per context nonterminal $A$ of $G$ and per variable (hole) of $s$ 
for the compatible overlaps. 
In addition there is an extra table for non-compatible overlaps. This makes $h+1$ tables where $h$ is the number of variables of $s$.

The entries in the tables are pairs of a position  and a substitution  necessary for the overlap.  
Since terms of exponential size and depth  may be represented in the STG $G$,
 a compact representation of a large number of entries is necessary in order to keep the tables of polynomial size. 
Indeed this is possible exploiting periodicity. 
If the number of entries in a table are sufficiently dense, then the periodicity theorem implies that 
a large subset of the entries enjoys regularities, and a series of periodic overlaps can be represented in one entry,
consisting of: a start position, a period (a position, respectively a context nonterminal), and the number of successive entries.

In more detail, the construction of the prefix tables is bottom-up  w.r.t. the grammar where the productions
$A \to A_1A_2$ for context nonterminals permit to construct the $A$-tables from the $A_1, A_2$-tables, and where the start are
the contexts with hole-depth 1. This construction must take into account the compact representation of the entries:
single ones and periodic ones, which makes the description of the algorithm rather complex due to lots of cases.
The construction of the prefix table in the case $A \to A_1A_2$ and the periodic cases is depicted in Figure  \ref{figure-construction-cases}
where (a) shows the case where $A$ has a periodic suffix, (b) shows the case where  $A$ has an inner part that is periodic, 
  (c) shows a case where the periodicity goes into a direction that is not compatible with the hole of $A_2$, which leads to the sequential overlap case;
and (d) is a case of a sequential overlap already in the table for $A_1$.
The generation of the periodic entries is done in an extra step: compaction, where the periodic overlaps are detected by searching 
for sufficiently dense entries. This is the only place where periodic entries are generated.

In addition to the prefix tables there is a result table, which contains the detected submatchings, and which is  
maintained during construction of the prefix tables. 

Since it is necessary to also have submatchings in terms, i.e. for term nonterminals, we keep things simple and  assume 
that every production for a term nonterminal is of the form $A \to CA_1$, where $A_1$ is a term nonterminal with production $A_1 \to a$, 
i.e. a constant. This rearrangement of $G$ can be done efficiently, and thus does not restrict generality. For these nonterminals
the extraction of the submatchings can be done using the already constructed prefix-tables.

Note that during construction of the tables, the STG $G$ may have to be extended in every step.

\begin{example}
We describe several small examples for compatible entries in a prefix table. Therefore we slightly extend Example \ref{example-overlaps}. 
Let the STG be $S \to A;  A \to A_1A_1; A_1 \to A_2A_2, A_2 \to f(a_1,[\cdot])$. 
\begin{enumerate}
\item Then $(C,A_2,\infty)$ for $C \to [\cdot]$ is a potential entry in a result table for $A$.
\item Let $A_4 \to g([\cdot]), B \to A_4A, C' \to A_4$. Then $(C',A_2,\infty)$ is an entry in the result table for $B$.
\item\label{example-pref-enum-3}  Let $B' \to BA_4$, then $(A_4,A_2,2)$ is a potential entry in the result table for $B'$.
\item\label{example-pref-enum-4}  The tuple $(A_4,A_2,3)$  is an entry in the prefix table for $B$.
\item Let $B'' \to A_6A_4, A_6 \to A_4A_1$. The context $A_6$ is then a potential  entry in the result and prefix tables of $B''$. 
\end{enumerate}
Note that item \ref{example-pref-enum-4} cannot be used as a result, since composing $B$ as in $B' \to BA_4$ in item \ref{example-pref-enum-3},
may render an overlap invalid. 
\end{example}

\begin{example}
We describe an example  for a non-compatible entry in a prefix table. Therefore we slightly modify Example \ref{example-overlaps}. 
 Assume there is an STG $G$. Let $c = f(a_1, f(a_1, f(a_1, f(a_1,[\cdot]))))$, $d = f(a_1, f(a_1, f(a_1,f([\cdot],f(a_1,f(a_1,a_1))))))$,
  and let $P,D,C_0,S$ be a nonterminals  such that $\val(P) = f(a_1,[\cdot])$, $\val(D) = d, \val(S) = c$, $\val(C_0) = [\cdot]$. 
   Then an entry in the 
  non-compatible prefix 
  table for $D$ could be $(C_0,P,3)$.
\end{example}

\begin{figure}[t]
\begin{tabular}{p{0.22\textwidth}p{0.22\textwidth}p{0.22\textwidth}p{0.3\textwidth}}
\begin{minipage}{0.22\textwidth}
 \begin{tikzpicture}[auto]
     \draw (1,0) circle (2pt);
 \draw[-] (0,0) --   (2,0) ; \draw[-] (0,0) --   (1,1) ; \draw[-] (1,1) --   (2,0) ;
 \draw[-] (0,1) --   (2,1)   ; \draw[-] (0,1) --   (1,2) ; \draw[-] (1,2) --   (2,1) ;
  \draw[-] (0,2) --   (2,2)   ; \draw[-] (0,2) --   (1,3) ; \draw[-] (1,3) --   (2,2) ;
   \draw[-] (0,3) --   (2,3)   ; \draw[-] (0,3) --node[at start,above]{$A_2~~$}     (1,4) ; \draw[-] (1,4) --   (2,3) ;
   \draw[-] (0,3.5) --   (2,3.5);
    \draw[-] (0,4) --   (2,4)   ; \draw[-] (0,4) --   (1,5) ; \draw[-] (1,5) --   (2,4) ;
     \draw[-] (0,5) -- node[midway,above]{$P$}   (2,5)   ; \draw[-] (0,5) --   (1,6) ; \draw[-] (1,6) --   (2,5) ;
      \draw[-] (0,6) -- node[midway,above]{$C$}   (2,6)   ; \draw[-] (0,6) -- node[midway,above]{$A_1$}    (1,6.5) ; \draw[-] (1,6.5) --   (2,6) ;
  \end{tikzpicture}
\end{minipage}
& 
\begin{minipage}{0.22\textwidth}
 \begin{tikzpicture}[auto]
 \draw[-] (0,0) --   (2,0.5) ; \draw[-] (0,0) --   (1,1) ; \draw[-] (1,1) --   (2,0.5) ;
 \draw[-] (0,1) --   (2,1)   ; \draw[-] (0,1) --   (1,2) ; \draw[-] (1,2) --   (2,1) ;
  \draw[-] (0,2) --   (2,2)   ; \draw[-] (0,2) --   (1,3) ; \draw[-] (1,3) --   (2,2) ;
   \draw[-] (0,3) --   (2,3)   ; \draw[-] (0,3) --node[at start,above]{$A_2~~$}     (1,4) ; \draw[-] (1,4) --   (2,3) ;
   \draw[-] (0,3.5) --   (2,3.5);
    \draw[-] (0,4) --   (2,4)   ; \draw[-] (0,4) --   (1,5) ; \draw[-] (1,5) --   (2,4) ;
     \draw[-] (0,5) -- node[midway,above]{$P$}   (2,5)   ; \draw[-] (0,5) --   (1,6) ; \draw[-] (1,6) --   (2,5) ;
      \draw[-] (0,6) -- node[midway,above]{$C$}   (2,6)   ; \draw[-] (0,6) -- node[midway,above]{$A_1$}    (1,6.5) ; \draw[-] (1,6.5) --   (2,6) ;
        \draw (1,0.25) circle (2pt);
  \end{tikzpicture}
\end{minipage}
& 
\begin{minipage}{0.22\textwidth}
 \begin{tikzpicture}[auto]
 \draw[-] (0,0) --   (2,0.5) ; \draw[-] (0,0) --   (1,1) ; \draw[-] (1,1) --   (2,0.5) ;
 \draw[-] (0,1) --   (2,1)   ; \draw[-] (0,1) --   (1,2) ; \draw[-] (1,2) --   (2,1) ;
       \draw (1.5,2) circle (2pt);
  \draw[-] (0,2) --   (2,2)   ; \draw[-] (0,2) --   (1,3) ; \draw[-] (1,3) --   (2,2) ;
   \draw[-] (0,3) --   (2,3)   ; \draw[-] (0,3) --   (1,4) ; \draw[-] (1,4) --   (2,3) ;
   \draw[-] (0,4) --   (2,4)   ; \draw[-] (0,4) --node[at start,above]{$A_2~~$}     (1,5) ; \draw[-] (1,5) --   (2,4) ;
   \draw[-] (0,4.5) --   (2,4.5);
    \draw[-] (0,5) --   (2,5)   ; \draw[-] (0,5) --   (1,6) ; \draw[-] (1,6) --   (2,5) ;
     \draw[-] (0,6) -- node[midway,above]{$P$}   (2,6)   ; \draw[-] (0,6) --   (1,7) ; \draw[-] (1,7) --   (2,6) ;
      \draw[-] (0,7) -- node[midway,above]{$C$}   (2,7)   ; \draw[-] (0,7) -- node[midway,above]{$A_1$}    (1,7.5) ; \draw[-] (1,7.5) --   (2,7) ;
  
  \end{tikzpicture}
\end{minipage} 
%
&
\begin{minipage}{0.3\textwidth}
 \begin{tikzpicture}[auto]
 \draw[-] (0,0) --   (2,0.5) ; \draw[-] (0,0) --   (1,1) ; \draw[-] (1,1) --   (2,0.5) ;
 \draw[-] (0,1) --   (2,1)   ; \draw[-] (0,1) --   (1,2) ; \draw[-] (1,2) --   (2,1) ;
  \draw[-] (0,2) --   (2,2)   ; \draw[-] (0,2) --   (1,3) ; \draw[-] (1,3) --   (2,2) ;
   \draw[-] (0,3) --   (2,3)   ; \draw[-] (0,3) --     (1,4) ; \draw[-] (1,4) --   (2,3) ;
    \draw[-] (0,4) --   (2,4)   ; \draw[-] (0,4) --   (1,5) ; \draw[-] (1,5) --   (2,4) ;
      \draw[-] (1.5,5) --node[midway,above]{$~~~A_2$}    (3,3.5) ;   \draw[-] (1.5,5) --   (2.5,3.5); \draw[-] (2.5,3.5) --   (3,3.5);  
      \draw (1.5,5) circle (2pt);      \draw (2.75,3.5) circle (2pt);
     \draw[-] (0,5) -- node[midway,above]{$P$}   (2,5)   ; \draw[-] (0,5) --   (1,6) ; \draw[-] (1,6) --   (2,5) ;
      \draw[-] (0,6) -- node[midway,above]{$C$}   (2,6)   ; \draw[-] (0,6) --node[midway,above]{$A_1$}    (1,6.5) ; \draw[-] (1,6.5) --   (2,6) ;
  
  \end{tikzpicture}
\end{minipage}  \\\\
\multicolumn{1}{c}{(a)} &
\multicolumn{1}{c}{(b)} &
\multicolumn{1}{c}{(c)} &
\multicolumn{1}{c}{(d)}  
\end{tabular}
\caption{Cases in the construction of the prefix tables for periodic entries}\label{figure-construction-cases}
\end{figure}
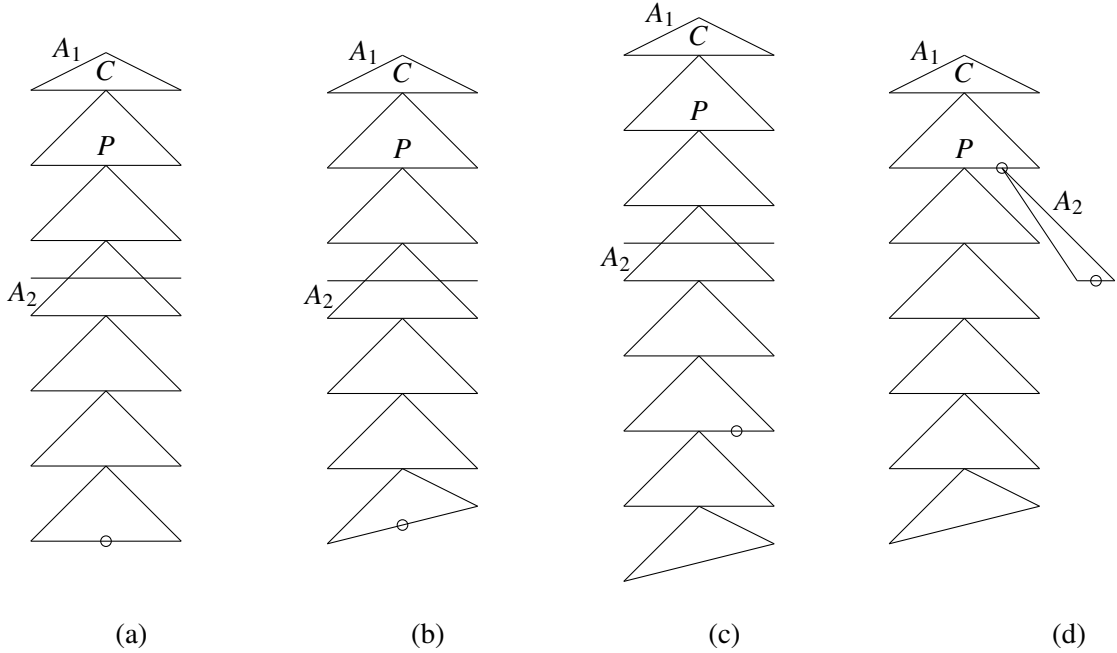

 \begin{theorem}[Linear Submatching]\label{thm:linear-matching-polynomial}
 Let $G$ be an STG, and $S,T$ be two term nonterminals such that $\val(S)$ is a linear term, and the submatching positions of $\val(S)$ in $\val(T)$
 are to be determined.
 Then the algorithm for linear submatchings  computes 
 an $O(|G|^5)$-sized representation of all submatchings of $\val(S)$
 in $\val(T)$ in polynomial time dependent on the size of $G$.
 \end{theorem}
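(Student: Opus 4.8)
The plan is to convert the structural case analysis of Lemma~\ref{lem:submatch-nonground-simple} into a bottom-up dynamic-programming computation over the nonterminals of $G$. As a preprocessing step I would normalise $G$ so that every term-nonterminal production has the form $A \to C A_1$ with $A_1 \to a$ a constant, which can be done with polynomial overhead. Since $\val(S)$ is linear, the \emph{subterm case} of Lemma~\ref{lem:submatch-nonground-simple}(\ref{lem:submatch-nonground-simple-2b}) cannot occur, so the only configurations a submatch can be in are: (i) $\sigma(\val(S)) = \val(B)$ for a term nonterminal $B$; (ii) the \emph{overlap case}, where the submatch straddles a $C$--$B'$ boundary of a production $B \to C B'$ and the split position lies inside the term structure of $\val(S)$; and (iii) the \emph{subcontext case}, where the split falls inside the unique occurrence of a variable $x$ of $\val(S)$, so that $\sigma(x)$ is a nontrivial suffix context of $\val(C)$ plugged with $\val(B')$. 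All three are detected once we know, for every context nonterminal $A$ and every hole $v$ of $\val(S)$, all alignments of ``the prefix of $\val(S)$ cut at $v$'' with $\val(A)$ that put the hole $v$ on the hole of $\val(A)$, together with one extra table per context nonterminal recording the non-compatible overlaps of $\val(S)$ with $\val(A)$ --- so $h+1$ tables per context nonterminal, where $h = |\FV(\val(S))| \le |G|$.

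These $h+1$ tables are built bottom-up along $>_G$. The base cases are the productions $C \to [\cdot]$ and $C \to f(\ldots,[\cdot],\ldots)$, whose alignments with $\val(S)$ can be enumerated directly. For a production $C \to C_1 C_2$, a new alignment of a prefix of $\val(S)$ with $\val(C)$ arises by composing a suffix-alignment stored in a $C_1$-table with a prefix-alignment stored in a $C_2$-table, possibly stacking several copies of $\val(S)$ along the hole path; compatible alignments are combined by string arithmetic on the (SLP-represented) hole positions, while the non-compatible ones are controlled by Proposition~\ref{proposition:non-compatible-overlaps}, which guarantees that only the parallel and the sequential configuration of Figure~\ref{fig:three-overlaps} appear, never a mixture. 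The subcases that arise --- a periodic suffix of $\val(C)$, an inner periodic block, a periodicity pointing off the hole path of $C_2$ which forces the sequential case, and a sequential overlap already present in the $C_1$-table --- are precisely those of Figure~\ref{figure-construction-cases}. In parallel a result table is maintained: whenever an alignment closes into a genuine submatch of type (i)--(iii), its compressed position and substitution are inserted; for a term nonterminal $A \to C A_1$ the submatches are read off from the $C$-tables and $A_1 \to a$.

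Because $\val(A)$ can have exponential hole depth, table entries must be kept compact: single entries carrying a position and a substitution, and \emph{periodic} entries of the form (start position, period, count), the period being a context nonterminal of a polynomially extended STG. After each composition step a \emph{compaction} phase scans every table for a sufficiently dense run of entries; by the Periodicity Theorem~\ref{thm:context-overlaps-imply-periodicity}, instantiated with $h$ holes so that $h+2$ dense overlaps already force periodicity, such a run is genuinely periodic and is replaced by a single periodic entry whose correctness --- including the consistency of the substitution over all the collapsed copies --- is part of that theorem's conclusion. Compaction being the sole source of periodic entries, each step adds only a bounded number of ``exceptional'' entries per table, so the tables and the auxiliary STG growth stay polynomial in $|G|$; counting $O(|G|)$ context nonterminals times $h+1$ tables (with $h \le |G|$) times polynomially many entries, each carrying a substitution of polynomial size, gives the stated $O(|G|^5)$ bound, and every table operation runs in polynomial time via the standard SLP primitives for prefixes, suffixes, common prefixes and $\mathrm{gcd}$ of word lengths.

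It remains to establish soundness and completeness. Soundness is an induction along $>_G$: every table entry, single or periodic, denotes a real overlap of $\val(S)$ with $\val(A)$, and every result-table entry a real submatch, the periodic entries being justified by the Periodicity Theorem. Completeness is obtained by matching each possibility of Lemma~\ref{lem:submatch-nonground-simple} to the step of the construction that records it, the essential point being that compaction only re-encodes a block of alignments and never discards one. The main obstacle is the inductive step for $C \to C_1 C_2$: one must show that the finitely many schematic situations of Figure~\ref{figure-construction-cases} really exhaust how a compatible or non-compatible alignment is created at, or propagates through, the $C_1$--$C_2$ boundary, that periodic runs straddling that boundary (a periodic suffix of $\val(C_1)$ continuing into $\val(C_2)$, or a sequential overlap turning a parallel run into a sequential one) are merged correctly, and that each step creates only $O(1)$ new exceptional entries per table --- this last invariant, carried through the induction, is precisely what makes the $O(|G|^5)$ size bound go through, and it is where the Periodicity Theorem does the essential work.
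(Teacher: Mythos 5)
Your proposal follows essentially the same route as the paper: the paper's Section~\ref{sec:linear-submatch} sketch is exactly this construction --- $h+1$ prefix tables per context nonterminal (one per hole of $\val(S)$ plus one for non-compatible overlaps, the latter governed by Proposition~\ref{proposition:non-compatible-overlaps}), bottom-up composition along $C \to C_1C_2$ with the cases of Figure~\ref{figure-construction-cases}, a separate compaction step as the sole source of periodic entries justified by the Periodicity Theorem~\ref{thm:context-overlaps-imply-periodicity}, the normalisation $A \to CA_1$ with $A_1 \to a$, and a result table maintained alongside. The detailed case analysis and the entry-count invariant that you flag as the remaining obstacle are likewise deferred by this extended abstract to the full version, so your write-up matches the paper's argument at the level of detail it provides.
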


\section{Submatching Algorithms for Other Cases}\label{sec:other-cases}

We consider  several specialized situations:  ground terms, uncompressed patterns, DAG-compressed terms,
and also non-linear terms. 

\subsection{Ground Term Submatching}

If $s$ is ground and compressed by a nonterminal $S$ then  submatching can be solved  in polynomial time by translating
both compressed terms into their compressed preorder traversals
(i.e.\ strings)   \cite{lohreymaneth:05,busatto-lohrey-maneth:08} and then applying string pattern matching 
\cite{rytter:04,lifshits:07}. 
The string matching algorithm in \cite{lifshits:07,jez-matching:2012} computes a polynomial representation of all occurrences.
Note that in our case, the structure of ground terms is very special  as a string matching problem: periodic overlaps of
the preorder traversal as strings are not possible.
Thus the complete output of the algorithm is as follows: 
(i) a list of term nonterminals $N$ of the input STG $G$, where $\val(\sigma(S)) = \val(N)$, and (ii) 
a list of pairs $(N,p)$, where the production for $N$ is of the 
form $N \to CN'$,  $p$ is a compressed position, and $\val(C)_{|\val(p)}[\val(N')] = \val(S)$. 
Moreover, every nonterminal $N$ appears at most once in the list.


The required time for string matching is $O(n^2m)$ where $n$ is the size of the SLP
of $T$ and $m$ is the size of the SLP of $S$. 
Since the preorder traversal can be computed in linear time  (see \cite{gascon-godoy-schmidt-schauss:TCL:2011}), we have: 
\begin{theorem}\label{theorem:ground-compressed-submatch}
The ground compressed term submatching can be computed in time $O(|G_T|^2|G_S|)$, and the output is a list of linear size.
\end{theorem}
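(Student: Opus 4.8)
The plan is to reduce ground compressed submatching to fully compressed string matching on preorder traversals, along the lines already sketched, and then to argue carefully both about the shape of the output and about the translation back to the term level.

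First I would bring $G_T$ into the normal form (linear time) in which every term-nonterminal production has the form $N \to C N'$ with $N' \to a$ a constant, and compute SLPs $G_S'$, $G_T'$ generating the preorder traversals $\mathrm{pre}(\val(S))$, $\mathrm{pre}(\val(T))$ as strings over $\Sigma$, using the STG-to-SLP translation of \cite{lohreymaneth:05,busatto-lohrey-maneth:08} (applicable since both $\val(S)$ and $\val(T)$ are ground). This costs $O(|G_S|)$, resp.\ $O(|G_T|)$, time and yields $|G_S'| = O(|G_S|)$, $|G_T'| = O(|G_T|)$ (see \cite{gascon-godoy-schmidt-schauss:TCL:2011}). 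Correctness rests on the elementary subtree/substring correspondence for preorder traversals with fixed arities: a contiguous substring of $\mathrm{pre}(t)$ that is itself the preorder traversal of a term is necessarily the traversal of a subtree of $t$; hence the subterm occurrences of $\val(S)$ in $\val(T)$ are in position-preserving bijection with the substring occurrences of $\mathrm{pre}(\val(S))$ in $\mathrm{pre}(\val(T))$.

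Next I would run the compressed string matcher of \cite{lifshits:07,jez-matching:2012} on $(G_S',G_T')$ in time $O(n^2m) = O(|G_T|^2|G_S|)$. Its output is, per nonterminal of $G_T'$, the set of pattern occurrences crossing that nonterminal's cut, a priori an arithmetic progression. Here I would exploit that ground-term traversals have no periodic overlaps: two occurrences of $\mathrm{pre}(\val(S))$ correspond to two subtrees of $\val(T)$ equal to $\val(S)$, and subtrees are laminar (pairwise nested or disjoint), so their traversal intervals never properly overlap; hence no non-trivial progression of cut-crossing occurrences arises and each such set has size at most one. Summing over the $O(|G_T|)$ nonterminals, the matcher effectively returns a flat list of $O(|G_T|)$ items. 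Each item is then translated back through the STG-to-SLP map: an occurrence equal to the string generated by some nonterminal of $G_T'$ corresponds either to a term nonterminal $N$ of $G_T$ with $\val(N) = \val(S)$ (emitted in list~(i)), or --- using the normal form $N \to C N'$ --- to a pair $(N,p)$ with $p$ a compressed position such that $\val(C)_{|\val(p)}[\val(N')] = \val(S)$ (emitted in list~(ii)); computing $p$ in SLP form uses only the polynomial-time position operations on SLPs recalled in Section~\ref{sec:prelim}. Since distinct records attach to distinct nonterminals, each nonterminal occurs at most once and the combined list has size $O(|G_T|)$. The running time is dominated by the string-matching step, giving $O(|G_T|^2|G_S|)$.

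The routine parts are the two translations and the SLP position arithmetic; the substantive point is the claim that the occurrence representation collapses to a linear-size list rather than remaining merely polynomial --- that is, proving via laminarity of subtree occurrences that the preorder traversal of a ground term admits no periodic family of occurrences of another ground traversal, and checking that the back-translation to term nonterminals and compressed positions exhausts every occurrence exactly once. That is where I expect the only real work to lie.
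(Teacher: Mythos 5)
Your proposal follows essentially the same route as the paper: translate both compressed ground terms into SLP-compressed preorder traversals, run the fully compressed string matcher of Lifshits/Je\.z in time $O(|G_T|^2|G_S|)$, observe that subterm occurrences of a ground pattern cannot overlap (hence no periodic families of occurrences), and back-translate each occurrence to a term nonterminal $N$ or a pair $(N,p)$ for a production $N \to CN'$, giving a linear-size list. Your laminarity argument is just a spelled-out justification of the paper's remark that periodic overlaps of the traversals are impossible, so the two proofs coincide in substance.
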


\subsection{DAG-Compressed Non-Linear Submatching}\label{subsec-dag-compressed}

Now we look for the case of DAG-compressed $s$, which is slightly more general than the uncompressed case, 
and  where  variables may occur several times in $s$. Also for this case, there is an algorithm for submatching that requires polynomial time. 
The algorithm outputs enough information to determine all the positions and substitutions of a submatch.  

\begin{example}\label{ex:exponentially-many-subs}
The number of possible substitutions for a submatch in a DAG-compressed term may be exponential: 
Let the productions be $S \to f(x,y)$, and $T \to f(A_1,A_1), A_1 \to f(A_2,A_2), \ldots, A_{n-1} \to f(A_n,A_n), A_n \to a$.
Then $\val(T)$ is a complete binary tree of depth $n$  and there is a  submatch at every non-leaf node.
Clearly, it is sufficient to have all $A_i$ as submatchings in the output, which is of linear size. 
\end{example}


   
%

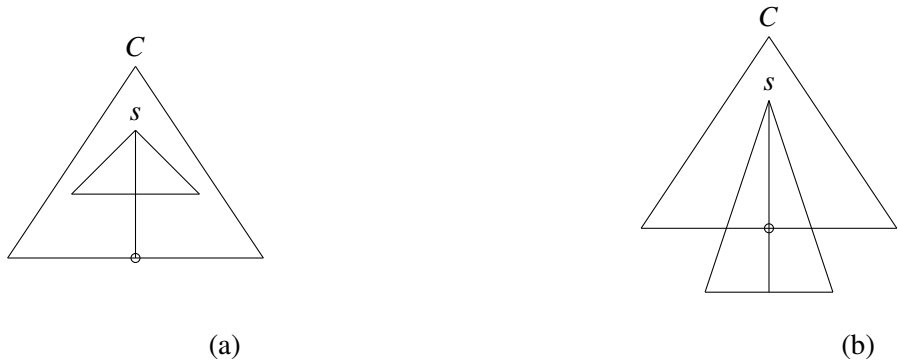
\begin{figure}[t]
\begin{tabular}{p{0.5\textwidth}p{0.5\textwidth}}
\begin{minipage}{0.5\textwidth}
 \begin{tikzpicture}[scale=0.85]  
     \draw (2,0) circle (2pt);
 \draw[-] (0,0) --   (2,3) ; \draw[-] (2,3) --  node[at start,above]{$C$}   (4,0) ; \draw[-] (0,0) --   (4,0) ;
 \draw[-] (1,1) --   (3,1)   ; \draw[-] (1,1) --   (2,2) ; \draw[-] (2,2) -- node[at start,above]{$s$}   (3,1) ; 
  \draw[-] (2,2) --   (2,0);
  \end{tikzpicture}
\end{minipage}
& 
\begin{minipage}{0.5\textwidth}
\begin{tikzpicture}[scale=0.85]  
     \draw (2,1) circle (2pt);
 \draw[-] (0,1) --   (2,4) ; \draw[-] (2,4) --  node[at start,above]{$C$}   (4,1) ; \draw[-] (0,1) --   (4,1) ;
 \draw[-] (1,0) --   (2,3)   ; \draw[-] (2,3) -- node[at start,above]{$s$}    (3,0) ; \draw[-] (1,0) --  (3,0) ; 
  \draw[-] (2,3) --   (2,0);
  \end{tikzpicture}
%
\end{minipage}   \\
\quad \\
\multicolumn{1}{c}{(a)~~~~~~~~~~~~~~~~~~~~~~~~} &
\multicolumn{1}{c}{(b)~~~~~~~~~~~~~~~~~~~~~~~~}   
\end{tabular}
\caption{Cases in  the construction of the s-in-C-table for DAG-compression}\label{fig:cins-cases}   
\end{figure}

 In the case of a DAG-compressed or uncompressed pattern-term (not necessarily linear) $s$  and STG-compressed target term $t$,
   the algorithm  for computing all submatchings is designed in dynamic programming style. It constructs a table
   of possible submatchings of $s$ in the context nonterminals corresponding to $t$. The key of the table is $(C,p)$,
   where $C$ is a context nonterminal, and $p$ a position that is a suffix of $\val(C)$ as well as a position in $s$.
    The number of these positions is linear in  $|G_s|+|G_t|$ for every context.  
   The entries are substitutions into the variables of $s$, i.e. a list of pairs $(x_i,A_i)$, where $A_i$ is a term nonterminal 
    representing a ground term.
    There is also a result list of found submatchings  in contexts $C$ contributing to $T$, 
    and term nonterminals for ground terms that are instances of $s$. 
    The construction proceeds again bottom-up in the STG $G_t$ for context nonterminals, and for 
    $A \to A_1A_2$, constructs the table for $A$ from the tables for $A_1, A_2$, and in case a full submatching is found, inserts
    a result into the result list.
    
    Finally, from these information, a representation of all submatchings can be constructed by looking at the right hand sides
    of the productions $A \to CB$ for term nonterminals, and using the table entries for $C$, and also constructing the occurrences
    of the ground terms.

\begin{theorem}\label{theorem:DAG-and-uncompressed-submatch} Let $G$ be an STG, and $S,T$ be two term nonterminals 
such that $S$ is   
DAG-compressed. Then the
submatch  computation  problem can be solved in polynomial time. Also an explicit  polynomial representation 
of all matching possibilities can be computed in polynomial time.
\end{theorem}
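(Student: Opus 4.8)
The plan is to carry out the dynamic-programming construction outlined before the statement, making the bounds explicit. First I would normalize $G$ so that every term-nonterminal production has the shape $A \to CB$ with $B$ a constant, as already assumed in the linear case; this costs only polynomial blow-up and will let me reduce the problem of finding submatches in term nonterminals to finding overlaps of $s$ against the context nonterminals that build up $T$. The central object is, for each context nonterminal $C$ of $G_t$, the \emph{$s$-in-$C$ table} whose keys are pairs $(C,p)$ where $p$ is simultaneously a position of $\val(C)$ that is a suffix-position (i.e.\ lies on the hole path or "below" it in the sense of Figure \ref{fig:cins-cases}) and a position occurring in $s$; the value stored at $(C,p)$ is the (unique, because $s$ is DAG-compressed and ground-instantiated) substitution $\sigma = [(x_1,A_1),\dots]$, where each $A_i$ is a term nonterminal of an extension of $G$, witnessing that the part of $\val(C)$ hanging below position $p$, together with $[\cdot]$ placed at $\mainp(C)$, matches the corresponding part of $s$. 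The key combinatorial observation I would prove first is that the number of admissible positions $p$ for a fixed $C$ is $O(|G_s|+|G_t|)$: they must be prefixes of $\mainp(\val(C))$ (of which there are at most the hole depth, but only polynomially many are "reachable" from productions) intersected with positions of $s$, and since $s$ is DAG-compressed its set of relevant positions is itself polynomial in $|G_s|$.

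Next I would give the bottom-up recurrence. The base cases are context nonterminals $C \to f(A_1,\dots,[\cdot],\dots,A_m)$ of hole depth one and $C \to [\cdot]$; for these one directly checks, by recursion on the DAG for $s$, whether $s$ matches starting from the relevant position, producing at most one substitution each. For a production $C \to C_1 C_2$ I would combine the $C_1$- and $C_2$-tables: an overlap of $s$ against $\val(C)$ either (a) fits entirely within the $\val(C_2)$ part — copy the $C_2$-entry after shifting the position by $|\mainp(\val(C_1))|$; (b) fits entirely within the $\val(C_1)$ part above the hole — copy a $C_1$-entry, but one must now check that the variable assignments are still consistent with the subterm that $C_2$ contributes at the hole, which means computing, for each variable $x_i$ that in $s$ lies at or below the splitting point, the term nonterminal for the corresponding subterm of $\val(C_1 C_2)$ and verifying (by the polynomial equality test for STGs) that it equals the previously recorded $A_i$; or (c) straddles the boundary, the "$s$-in-$C$" case of Figure \ref{fig:cins-cases}, where the upper part of $s$ is matched by a suffix of $\val(C_1)$ and the rest by a prefix of $\val(C_2[\cdot])$ — here I splice two partial matches and again check variable consistency by STG equality. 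Each of these steps introduces at most polynomially many new term nonterminals (for the subterms hanging at split points) and performs at most polynomially many equality tests, and since $s$ is DAG-compressed there is no need for periodic/compacted entries — the number of table entries per nonterminal stays polynomial, unlike the linear STG case. Whenever a recurrence step produces an entry whose position reaches the root of $s$, I record a full submatch in a result list: either a term nonterminal $N$ with $\val(N)=\val(\sigma(S))$ obtained from a term production $A \to CB$, or a pair $(C,p)$ analogous to the ground case. Finally, scanning the term productions $A \to CB$ and reading off the $C$-tables yields a polynomial-size representation of all submatch positions and their substitutions; Example \ref{ex:exponentially-many-subs} shows this representation, rather than an explicit list, is the best one can hope for, and it is what the theorem claims.

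The main obstacle I expect is the bookkeeping around \emph{variable consistency across the split} in the $C \to C_1C_2$ step: because $s$ may be non-linear, a single variable of $s$ can have occurrences governed partly by the $C_1$-table and partly by the subterm $C_2$ contributes, so one must argue that the substitution is either forced to a unique value (and can be represented by a single new nonterminal) or is inconsistent, and do so without decompressing. The resolution is that once the position at which an overlap "enters" $s$ is fixed, every variable occurrence of $s$ sits over a definite position of $\val(C)$, whose subterm is either an existing nonterminal or is cheaply nameable by a fresh nonterminal of an STG extension; consistency then reduces to a finite batch of STG equality checks, each polynomial by the Plandowski–Lifshits generalization cited in Section \ref{sec:prelim}. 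A secondary technical point is bounding the total number of fresh nonterminals created over the whole bottom-up pass — I would note that each context nonterminal contributes only its polynomially many table entries, each entry names at most $|\FV(s)|$ subterms, and $|\FV(s)| \le |G_s|$, giving an overall polynomial bound and hence polynomial running time. The DAG-compressed (rather than STG-compressed) hypothesis on $s$ is exactly what keeps the "positions of $s$" set polynomial and rules out the periodic-overlap phenomena that force the more elaborate compaction machinery of Theorem \ref{thm:linear-matching-polynomial}.
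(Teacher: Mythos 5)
Your overall plan coincides with the paper's construction for this case: a bottom-up dynamic program over the context nonterminals of $G_t$, tables keyed by pairs $(C,p)$ with partial substitutions $\{(x_i,A_i)\}$ as entries, a separate result list, and a final extraction step over the term productions $A\to CB$, with no periodicity/compaction machinery needed. The genuine problem is in your justification of the central counting claim, which is exactly what polynomiality hinges on. You key entries by \emph{prefixes} of $\mainp(\val(C))$ (i.e.\ start positions of $s$ on the hole path) and argue the count is polynomial because ``the set of positions of $s$ is polynomial in $|G_s|$''. Both halves fail: the hole depth of $\val(C)$, hence the number of prefixes, can be exponential in $|G_t|$, and a DAG-compressed $s$ can have exponentially many positions --- the complete binary tree of Example~\ref{ex:exponentially-many-subs} is already a counterexample; DAG-compression bounds the number of \emph{distinct subterms} and the \emph{depth} of $s$, not its number of positions. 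The argument that works, and the reason the table key is a position that is simultaneously a \emph{suffix} of the hole path of $\val(C)$ and a position of $s$, is that a suffix of the fixed string $\mainp(\val(C))$ is determined by its length alone, and a position of $s$ has length at most $\mathit{depth}(s)\le |G_s|$; hence at most $|G_s|+1$ admissible keys per context nonterminal. This is where the DAG hypothesis is actually used, and it must be the stated argument rather than a count of positions of $s$.

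A second concrete gap: your tables only cover overlaps in which the hole of $C$ lands exactly on a position of $s$, and you assert each entry is a substitution into \emph{ground} term nonterminals, ``unique because $s$ is ground-instantiated''. This omits the situation singled out in Lemma~\ref{lem:submatch-nonground-simple}, case~\ref{lem:submatch-nonground-simple-2b}: the hole of $C$ may lie strictly below a variable occurrence $x$ of $s$; then the hole position is not a position of $s$, and the pending value of $x$ is not a ground term but a suffix context of $\val(C)$ whose hole is filled only later. For non-linear $s$ --- the case this theorem is about --- such an entry cannot be finalized inside $C$, and when $x$ also occurs elsewhere with a ground binding the consistency check can only be performed after the composition step (where it reduces to one compressed equality test between the filled suffix context and the recorded $A_i$). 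Your combination cases (b)/(c), which check consistency by ``computing the term nonterminal for the corresponding subterm of $\val(C_1C_2)$'', silently presuppose that this subterm is ground, which need not hold while $C_1C_2$ is still a context. Adding an entry format recording such pending (variable, suffix-context-nonterminal) pairs --- the ``subterm case'' announced in the lemma --- repairs this without affecting the polynomial bounds; with that and the corrected key count, your proof matches the intended one.
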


\subsection{A Non-Deterministic Algorithm for Sub-Matching in the General Case}\label{subsec:submatching-general}

%

The submatching problem  for STG-compressed pattern terms that may be nonlinear can be solved by a relatively easy
search that leads to a non-deterministic polynomial time algorithm: Given $S$, with non-linear $s = \val(S)$, extract and construct
  a nonterminal $B$ representing a subterm $f(r_1,\ldots,r_n)$ of $s$ such that two terms $r_i,r_j$  contain a common
  variable. 
Then  non-deterministically choose a right hand side $r$ of a production of $G_t$ of the form $f(\ldots)$, 
then compute the usual match of $B$ with  $r$ using \cite{gascon-godoy-schmidt-schauss:08} which will produce an instantiation
of at least one variable of $\val(B)$, and hence of $s$.  Then iterate this until all variables with double occurrences are instantiated. 
For the resulting  linear  term
we know how to find all matching positions.

\begin{theorem}[Nondeterministic General Submatch]\label{theorem:general-submatch} 
Let $G$ be an STG  and $S,T$ be two nonterminals of $G$ where $\val(S)$ may contain variables. Then the
algorithm for fully compressed submatching for compressed terms $s,t$ requires at most searching in  $|G|^{|\Varmult(s)|}$
alternatives for the substitution and the computation for one alternative can be done in polynomial time.  
Thus the submatching problem is in NP. 
\end{theorem}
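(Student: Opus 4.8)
The statement to prove is Theorem~\ref{theorem:general-submatch}: that the nondeterministic general submatching procedure explores at most $|G|^{|\Varmult(s)|}$ alternatives, each processed in polynomial time, hence the problem is in NP. I would organize the proof around the informal description already given: repeatedly pick a subterm of $s$ that forces a variable instantiation, guess a matching right-hand side of $G_t$, run the known exact-matching algorithm, and recurse on the remaining (eventually linear) term.

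First I would set up the invariant. Let $s_0 = \val(S)$ and maintain a partially instantiated pattern $s_k$ obtained from $s_0$ by substituting some of the multiply-occurring variables by (compressed, ground) terms; maintain also a candidate subterm-position of $t$ at which $s_k$ is being matched, represented by a context nonterminal / position pair as in Lemma~\ref{lem:submatch-nonground-simple}. The key structural observation is: if $s_k$ still contains a variable $x \in \Varmult(s_0)$ occurring at least twice, then along $s_k$ there is a subterm $f(r_1,\dots,r_n)$ with two arguments sharing a common variable; extract a nonterminal $B$ with $\val(B) = f(r_1,\dots,r_n)$. For a genuine submatch, $\val(B)$ must sit at some position of $t$ whose label is $f$, and that position's symbol is produced by a unique production $A' \to f(A_1,\dots,A_n)$ of $G_t$ (after the normalization of $G$ already assumed in the excerpt), up to the choice of which occurrence — but the \emph{shape} to match against is one of the finitely many right-hand sides $r$ of $G_t$ of the form $f(\dots)$. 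So we nondeterministically choose one such $r$ (at most $|G_t| \le |G|$ choices) and invoke the fully-compressed exact matching algorithm of \cite{gascon-godoy-schmidt-schauss:08} on $B$ against $r$. By the cited result this runs in polynomial time in $|G|$ and either fails or returns a compressed substitution; because two of the $r_i$ shared a variable, success forces at least one variable of $\Varmult(s_0)$ to become instantiated to a ground compressed term.

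Next, the counting. Each successful guess-and-match round strictly decreases the number of not-yet-instantiated variables of $\Varmult(s_0)$ by at least one, so there are at most $|\Varmult(s)|$ rounds, each branching into at most $|G|$ alternatives; the total number of computation paths is therefore at most $|G|^{|\Varmult(s)|}$. Along each path, every round also does a polynomial amount of work (the exact match, plus extending the STG $G$ by polynomially many productions to record the new substitution and the composed pattern), and the extensions are of the "well-behaved" kind noted in Section~\ref{sec:prelim}, so after $|\Varmult(s)|$ rounds the grammar is still of polynomial size. When no multiply-occurring variable remains, the resulting pattern $s'$ is linear, and we apply Theorem~\ref{thm:linear-matching-polynomial} to compute, in polynomial time, a polynomial-size representation of all submatch positions of $s'$ in $t$ consistent with the guessed instantiation. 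Accepting iff this set is nonempty (and the guessed instantiations were consistent), the whole procedure is a nondeterministic polynomial-time algorithm, which gives membership in NP.

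**The main obstacle.** The delicate point is making the guessing both \emph{sound} (every accepting run corresponds to a real submatch) and \emph{complete} (every real submatch is witnessed by some run), while keeping the position bookkeeping coherent across rounds. Concretely: after instantiating $x$, the new pattern $s_{k+1}$ may be much larger, and the position of $\val(B)$ inside $t$ that we committed to must be threaded consistently with the top-level submatch position we are ultimately reporting — one must argue that fixing the shape $r$ (rather than a specific occurrence) loses no solutions, because the linear-submatching step at the end re-examines \emph{all} occurrences of the now-fixed linear pattern. I would handle this by keeping the instantiation $\theta$ as the only state carried between rounds (not a position), verifying at the end that $\theta(s_0)$ is linear-modulo-$\theta$ and feeding $\theta(s_0)$ to the linear algorithm; soundness then reduces to correctness of \cite{gascon-godoy-schmidt-schauss:08} at each round plus Theorem~\ref{thm:linear-matching-polynomial} at the end, and completeness follows because any true submatch $\sigma$ restricted to $\Varmult(s_0)$ is discovered round by round — each $\sigma(B)$ does occur in $t$, hence matches some right-hand-side shape $r$, which is among our guesses. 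The remaining routine checks are that the grammar-size growth stays polynomial and that extracting the witnessing subterm $B$ and normalizing $G_t$ are polynomial-time preprocessing steps.
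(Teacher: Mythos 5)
Your proposal follows essentially the same route as the paper's own argument in Subsection 4.3: repeatedly extract a subterm nonterminal $B$ whose arguments share a variable, nondeterministically guess one of the at most $|G|$ right-hand sides of $G_t$ with the same head symbol, run the compressed exact-matching algorithm of Gasc\'on--Godoy--Schmidt-Schau{\ss} to instantiate at least one variable of $\Varmult(s)$, iterate at most $|\Varmult(s)|$ times, and finish with the linear submatching algorithm, giving the $|G|^{|\Varmult(s)|}$ bound and NP membership. Your added discussion of soundness/completeness (carrying only the substitution between rounds and letting the final linear-submatch pass re-examine all occurrences) is a correct elaboration of details the extended abstract leaves implicit.
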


There remains a gap in the knowledge of the complexity of the fully compressed submatching problem for terms, which for the decision problem
is between $\mathrm{PTIME}$ and $\mathrm{NP}$.

\begin{remark}\label{remark-few-variables}
 The non-linear submatching problem can be computed in polynomial time  if there are few variable occurrences ($\le |G|)$ in $s$:  
 First linearize $s$, then use the linear compressed submatch and then perform  a postprocessing checking equality
 enforced by the variables of $s$.
\end{remark}

\section{Polynomial Compressed Term Rewriting}\label{sec:rewriting}

For our compressed representation  the natural approach to rewriting is to use parallel rewriting of 
the same subterm at several positions and by the same rewriting rule.
Note, however, that the set of redexes that are rewritten in parallel will depend  on the structure of the STG $G_t$, 
and not on the structure of the rewritten term $t$. \\[1pt]
%
Let $R$ be a compressed TRS, let $t$ be a ground term with $\val_G(T) = t$, let
  $R$ be compressed by the 
STG $G_R$ as $\{L_i \to R_i~|~i = 1,\ldots,n\}$
 where $L_i,R_i$ are term nonterminals. \\
A (parallel) term rewriting step is performed as follows:\\
First select $L_i \to R_i$ as the rule. 
There is an oracle, which is one of our submatching algorithms applied to $L_i$, for finding the redex for $\val(L_i)$ 
or the set of redexes that provides the following:  
\begin{enumerate}
\item An extension $G'$ of $G$, i.e. additional nonterminals and productions.
\item A substitution $\sigma$ as a list  of pairs: $\{x_1 \mapsto A_1, \ldots, x_m \mapsto A_m\}$, where $\FV(\val(L_i)) = \{x_1,\ldots,x_m\}$, 
      $A_i$ are term nonterminals in $G'$, and $\val(A_i)$ is a subterm of $t$. It is also assumed that the instantiation is integrated in the grammar $G'$
      as productions $x_i \to A_i$ for $i = 1,\ldots,m$.
\item A term nonterminal $A$ (corresponding to $L_i$) in $G'$ which contributes  to $\val(T)$, and a compressed position $p$.
\end{enumerate}
Then the rewriting step is performed by modifying the grammar such that somewhere in the part of the grammar contributing to $t$: 
$L_i$ is replaced by $R_i$. This will also generate an extension of $G_t$ on the fly and also a copy of the STG $G_R$ is made. 

A single-position rewriting step under STG-compression is performed in a similar way.

\begin{theorem}\label{theorem:main-2}
Let $R$ be a TRS compressed with $G_R$ and $t$ be a term compressed with an STG $G$. Then a sequence of $n$ term rewriting steps
  where submatching is a non-deterministic oracle that is not counted,
 can be performed in polynomial time. The size increase
by $n$  term rewriting steps is \mbox{$\mathcal{O}\big( |G_R|^2n^7\big(|G|^2 + |G|(\log n + 2|G_R|)  + (\log n + |G_R|)^2 \big)\big)$}.
\end{theorem}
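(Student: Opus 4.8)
The plan is to reduce the statement to the analysis of a single rewriting step, to bound the size increase caused by one step by a polynomial in the \emph{original} parameters $|G|,|G_R|$ and in $n$ (but not multiplicatively in the current grammar size), and then to sum over the $n$ steps; polynomiality of the total running time then follows, since each grammar operation used is polynomial-time in the size of the current (polynomial-size) grammar.

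\emph{One step.} Consider the $(k{+}1)$-st step applied to the current grammar $G_k$ (with $G_0 = G \cup G_R$), using a rule $L_i \to R_i$. The uncounted submatching oracle --- which by the left-linearity assumption is one of the polynomial algorithms of Sections~\ref{sec:linear-submatch}--\ref{sec:other-cases} --- returns an extension $G'$ of $G_k$, a substitution $\sigma = \{x_1 \mapsto A_1,\dots,x_m \mapsto A_m\}$ already installed as productions $x_j \to A_j$ with each $\val(A_j)$ a subterm of the current term, a term nonterminal $A$ (``corresponding to $L_i$'') that contributes to $T$, and a compressed position $p$ (for a parallel step, a finite family of positions). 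To carry out the step I would: (i) build a nonterminal $B$ with $\val(B) = \sigma(\val(R_i))$ from a fresh renamed copy of $G_R$ together with the productions $x_j \to A_j$, adding only $O(|G_R|)$ productions; (ii) using the standard STG spine-decomposition along the compressed position $p$, re-express $T$'s derivation so that it factors through a context nonterminal $C$ generating the prefix above $p$ and a term nonterminal generating the redex $\val(A)$; (iii) replace that redex subgrammar by $B$ and introduce a fresh top nonterminal $T_{k+1} \to C\,B$. For parallel rewriting, (ii)--(iii) are carried out simultaneously along the finite family of parallel positions, and the compact (periodic) representation used in Section~\ref{sec:linear-submatch} must be preserved throughout.

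\emph{Depth and position sizes.} The parenthetical factor $|G|^2 + |G|(\log n + 2|G_R|) + (\log n + |G_R|)^2$ is, up to constants, $\ell^2$ with $\ell = |G| + \log n + |G_R|$, and $\ell$ bounds the SLP-size of the compressed positions and spines occurring throughout the run. The key observation is that term depth grows only \emph{additively}: writing $d_k$ for the term depth after $k$ steps, one has $d_{k+1} \le d_k + 2^{|G_R|}$, because in $\sigma(\val(R_i))$ the only new structure is that of $\val(R_i)$, whose depth is $\le 2^{|G_R|}$, while the substituted parts $\val(A_j)$ are subterms of the term before the step and hence have depth $\le d_k$. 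Therefore $d_n \le 2^{|G|} + n\,2^{|G_R|}$, so $\log d_n = O(|G| + \log n + |G_R|)$, and the spine decomposition in (ii) costs a polynomial in $\ell$ --- this is where the $\ell^2$-shaped factor enters, through composing and decomposing $O(\ell)$ context nonterminals.

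\emph{Summation, and the main obstacle.} It remains to bound $|G_{k+1}| - |G_k|$ by a polynomial that does not scale multiplicatively with $|G_k|$, so that $|G_n|$ stays polynomial rather than exploding geometrically in $n$. Step (i) adds $O(|G_R|)$ productions; steps (ii)--(iii) are designed to be \emph{local}, re-expressing only an $O(\ell)$-sized spine and the finitely many redex positions and leaving the rest of $G_k$ untouched --- a dependence on $k$ entering only because the spine from the fresh top nonterminal $T_k$ must pass through the $O(k)$ patches installed by earlier steps. An induction on $k$ using these per-step increments yields $|G_k| = O(|G_R|^2 k^7 \ell^2)$, which at $k = n$ gives the stated bound, and since fresh copying, the SLP operations (prefixes, suffixes, common prefixes), spine decomposition and equality tests are all polynomial-time in the current grammar size, the whole sequence runs in polynomial time. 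The main obstacle --- and the point at which this extended abstract must defer to the full version --- is exactly the cost accounting of steps (ii)--(iii): showing that a whole parallel family of redexes can be spliced at once using only polynomially many new context nonterminals, that the periodic/compact representation is preserved under the splice, and that the per-step increase is therefore additive up to the claimed polynomial, with no hidden geometric blow-up.
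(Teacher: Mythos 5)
Your overall plan coincides with what the paper itself sketches in Section~\ref{sec:rewriting}: take the oracle-supplied redex data (extension $G'$, substitution as productions $x_j \to A_j$, nonterminal $A$ and compressed position $p$), make a fresh copy of $G_R$ to build $\sigma(\val(R_i))$, splice it into the part of the grammar contributing to $t$ by partially decompressing the position, and sum over the $n$ steps; your identification of the parenthetical factor as roughly $(|G|+\log n+|G_R|)^2$, driven by the additive depth growth $d_{k+1} \le d_k + 2^{|G_R|}$, is also the right reading of where the $\log n$ terms come from. Since the paper is an extended abstract, it gives no more than this construction sketch and defers the detailed accounting to the full version, so at the structural level you are on the paper's route.

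The genuine gap is the quantitative heart of the statement: the claimed size increase $\mathcal{O}\big(|G_R|^2 n^7 \ell^2\big)$ is asserted via ``an induction on $k$ \dots yields $|G_k| = \mathcal{O}(|G_R|^2 k^7 \ell^2)$'' without any derivation of the per-step increments, and in fact the increments you do describe contradict that conclusion. You argue each step adds $\mathcal{O}(|G_R|)$ productions for the copy plus an $\mathcal{O}(\ell)$-sized (or, with the $\mathcal{O}(k)$ earlier patches, $\mathcal{O}(k\,\ell)$-sized) local spine rearrangement; summing such increments gives a bound of order $n^2$ times a polynomial in $\ell$ and $|G_R|$, nowhere near degree $7$ in $n$ or degree $2$ in $|G_R|$. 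So either your per-step analysis silently omits the actual sources of growth (e.g.\ how the SLPs for positions and the compacted/periodic table entries must be rebuilt and grow from step to step, how a whole parallel family of redexes is spliced, and how the spine re-expression interacts with nonterminals shared by many occurrences), or the stated bound simply does not follow from what you wrote. You flag exactly this cost accounting as ``the main obstacle,'' which is honest, but it means the proposal is a proof plan rather than a proof: the theorem's only nontrivial content --- that the growth is polynomial with the specific exponents, so that $n$ steps stay in polynomial time --- is precisely the part left unestablished. A complete argument would also need to separate the space bound from the running time (the paper notes the time degree is worse because a Plandowski-style equality test is invoked at each construction step), which your proposal mentions only implicitly via ``polynomial-time in the current grammar size.''
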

The complexity bound is
$\mathcal{O}(n^7\log^2(n))$  depending  on the number $n$ of rewrites;
 $\mathcal{O}(|G_0|^2)$  depending on the size of $G_T$;
 and  $\mathcal{O}(|G_R|^4)$  depending on the size of  $G_R$.
%
Note that the degree of the polynomial for the estimation of the worst case running time  is worse than the space bound.
The term rewriting sequence has to be constructed (+ $1$) and Plandowski equality check has to be used in every
construction step, which contributes a factor of $3$ in the exponent. But note that there are faster deterministic
tests \cite{lifshits:07,jez-matching:2012} and even faster randomized equality checks
\cite{gasieniec-karpinski-plandowski-rytter:96,berman-karpinski-2d:02,schmidt-schauss-schnitger:12}.

Single-position rewriting requires a partial decompression of the redex position (similar to the parallel), 
 which leads to an extra increase in the size of the STG, but to the same,  still polynomial, complexity.

 
 Combining the results on submatching and sequences of rewriting, we obtain the following corollaries:
 
 \begin{corollary}\label{corr-main-0}
Let $R$ be an STG-compressed  TRS and $t$ be an STG-compressed term. Then a sequence of $n$ term rewriting steps
using  the submatching algorithm in Subsection  \ref{subsec:submatching-general}
can be performed in non-deterministic polynomial time.   
\end{corollary}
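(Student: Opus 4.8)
The plan is to combine Theorem \ref{theorem:main-2} with the non-deterministic general submatching algorithm of Subsection \ref{subsec:submatching-general} (Theorem \ref{theorem:general-submatch}). Theorem \ref{theorem:main-2} already establishes that a sequence of $n$ rewriting steps can be performed in polynomial time \emph{provided that submatching is supplied by an oracle that is not counted}, and that the total size increase stays polynomially bounded in $|G|$, $|G_R|$ and $n$. The work that remains is precisely to show that this oracle can be realized by an actual non-deterministic polynomial-time procedure when $R$ and $t$ are only STG-compressed (so the left-hand sides $\val(L_i)$ may be non-linear), and that plugging the realized oracle into the iteration keeps the whole computation in non-deterministic polynomial time.

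First I would fix, at each of the $n$ steps, the current grammar $G^{(k)}$ describing the current term together with the copy of $G_R$; by Theorem \ref{theorem:main-2} we know $|G^{(k)}|$ is bounded by a fixed polynomial $P(|G|,|G_R|,n)$ uniformly in $k$. For the $k$-th step we must produce the oracle data of items (1)--(3) in the description preceding Theorem \ref{theorem:main-2}: an extension $G'$ of $G^{(k)}$, a compressed substitution $\sigma = \{x_1 \mapsto A_1,\dots,x_m\mapsto A_m\}$ with each $\val(A_i)$ a subterm of the current term, and a term nonterminal $A$ contributing to the current term together with a compressed redex position $p$. I would invoke the algorithm of Subsection \ref{subsec:submatching-general}: it extracts a nonterminal $B$ for a subterm $f(r_1,\dots,r_\ell)$ of $\val(L_i)$ in which two arguments share a variable, non-deterministically guesses a right-hand side $f(\dots)$ among the $O(|G^{(k)}|)$ candidate productions of the target grammar, runs the exact compressed matching algorithm of \cite{gascon-godoy-schmidt-schauss:08} (polynomial time, producing a compressed instantiation of at least one repeated variable), adds the resulting productions $x_j\to A_j$ to the grammar, and iterates; after at most $|\Varmult(\val(L_i))| \le |G^{(k)}|$ rounds all repeated variables are instantiated and the resulting pattern is linear, so the linear submatching algorithm of Theorem \ref{thm:linear-matching-polynomial} finds all remaining matching positions in polynomial time. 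Each round is polynomial in $|G^{(k)}|$, hence polynomial in $P(|G|,|G_R|,n)$, and there are polynomially many rounds, so one oracle call is non-deterministic polynomial time and enlarges the grammar only polynomially.

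The remaining bookkeeping is to chain the $n$ calls: after step $k$ we hand the (polynomially enlarged) grammar to step $k+1$, so the invariant ``current grammar has size $\le P(|G|,|G_R|,n)$'' is maintained — this is exactly what the size estimate in Theorem \ref{theorem:main-2} guarantees, since the extra nonterminals contributed by the oracle are among the ``additional nonterminals and productions'' already accounted for there. The non-deterministic choices across all $n$ steps compose to a single non-deterministic computation whose running time is the product of $n$ with the per-step polynomial, hence still polynomial; by Theorem \ref{theorem:general-submatch} the number of alternatives at each step is bounded by $|G^{(k)}|^{|\Varmult(\val(L_i))|}$, which is finite, so the guessed computation tree has a polynomial-time accepting path iff the rewrite sequence exists. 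The main obstacle I anticipate is not conceptual but organizational: one must check that the intermediate grammars produced by the submatching oracle genuinely satisfy the input format assumed by Theorem \ref{theorem:main-2} (in particular the normal form $A \to C B$ with $B \to a$ for term nonterminals, and that the instantiation productions $x_i \to A_i$ with $\val(A_i)$ a subterm of $t$ are correctly integrated), and that the size bound of Theorem \ref{theorem:main-2} was stated for exactly this oracle; once that compatibility is confirmed, the corollary follows immediately.
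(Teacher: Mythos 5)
Your proposal is correct and follows essentially the same route as the paper, whose proof of this corollary is simply the one-line observation that it follows by combining Theorem~\ref{theorem:main-2} (the oracle-based bound on $n$ rewriting steps) with Theorem~\ref{theorem:general-submatch} (the non-deterministic polynomial-time realization of the submatching oracle). The additional bookkeeping you supply --- the uniform polynomial bound on the intermediate grammars and the compatibility of the oracle's output with the input format assumed by Theorem~\ref{theorem:main-2} --- is a faithful elaboration of what the paper leaves implicit, not a different argument.
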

\begin{proof}
This follows from Theorems \ref{theorem:main-2} and \ref{theorem:general-submatch}.
\end{proof}

 \begin{corollary}\label{corr-main-1}
Let $R$ be a left-linear STG-compressed TRS and $t$ be an STG-compressed term. Then  $n$ term rewriting steps
where the submatching algorithms  in Subsection \ref{subsec:submatching-general}
are used 
can be performed in polynomial time.  
\end{corollary}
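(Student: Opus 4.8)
The plan is to derive the corollary by plugging the \emph{deterministic} linear submatching algorithm of Theorem~\ref{thm:linear-matching-polynomial} into the rewriting-sequence machinery of Theorem~\ref{theorem:main-2}, in place of the abstract (non-deterministic) submatching oracle. The enabling observation is that left-linearity of the compressed TRS $R$ means that every left-hand side $\val(L_i)$ is a linear term; hence each redex search triggered by a rewriting step is an instance of the linear submatching problem of Section~\ref{sec:linear-submatch}, with pattern nonterminal $L_i$ --- whose value is a fixed linear term that never changes along the derivation --- and the current target nonterminal. So the key step is the following: when a rewriting step selects a rule $L_i \to R_i$, one runs the algorithm of Theorem~\ref{thm:linear-matching-polynomial} on $S := L_i$ and the current $T$ over the current grammar $G$. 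This takes time polynomial in $|G|$ and yields an $O(|G|^5)$-sized representation of all submatchings, from which one extracts in polynomial time a single concrete redex packaged exactly as Theorem~\ref{theorem:main-2} requires of its oracle: an extension $G'$ of $G$, a substitution $\sigma = \{x_1 \mapsto A_1,\dots,x_m \mapsto A_m\}$ with the $A_j \in G'$ and each $\val(A_j)$ a subterm of $t$, and a term nonterminal $A$ contributing to $\val(T)$ together with a compressed position $p$. (The parallel variant that contracts the whole set of redexes determined by the shape of $G_t$ is obtained from the same representation.)

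With this deterministic oracle in hand, Theorem~\ref{theorem:main-2} does the rest. It already shows that $n$ rewriting steps, with submatching left as an uncounted oracle, run in polynomial time, and that the accumulated grammar after $n$ steps has size bounded by $\mathcal{O}\big(|G_R|^2 n^7(|G|^2 + |G|(\log n + 2|G_R|) + (\log n + |G_R|)^2)\big)$, i.e.\ polynomial in $n$, $|G|$ and $|G_R|$. Consequently, at every intermediate step $k \le n$ the grammar on which the oracle is invoked is of polynomial size, so by Theorem~\ref{thm:linear-matching-polynomial} the $k$-th oracle call costs a polynomial in $n$, $|G|$ and $|G_R|$; summing over the $n$ calls and adding the polynomial bookkeeping of Theorem~\ref{theorem:main-2} leaves the total running time polynomial, which is the claim.

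The main point that needs checking --- and essentially the only place the argument could go wrong --- is that replacing the abstract oracle by the concrete linear-submatching algorithm does not disturb the size and time analysis of Theorem~\ref{theorem:main-2}. This is essentially immediate, since that analysis relies only on the \emph{output interface} of the oracle (the extension $G'$, the substitution $\sigma$, and the redex nonterminal together with the position), which the linear submatching algorithm meets, and on the oracle remaining applicable at every step --- which holds because $\val(L_i)$ stays linear and of fixed size throughout the derivation. One should also verify that extracting a single redex (or the parallel set) from the $O(|G|^5)$-sized representation, and re-expressing the found match so that $A$ lies on the part of the grammar contributing to $\val(T)$, can indeed be done in polynomial time; this is routine given the form of the representation produced in Section~\ref{sec:linear-submatch}. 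Hence the polynomial-time bound follows from Theorems~\ref{theorem:main-2} and~\ref{thm:linear-matching-polynomial}.
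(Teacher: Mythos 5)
Your proposal is correct and matches the paper's own argument, which simply derives the corollary from Theorems \ref{theorem:main-2} and \ref{thm:linear-matching-polynomial}: left-linearity makes each redex search an instance of linear submatching, so the deterministic polynomial-time algorithm replaces the oracle while the grammar stays polynomially bounded. You have merely spelled out in detail the combination that the paper states in one line.
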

\begin{proof}
This follows from Theorems \ref{theorem:main-2} and \ref{thm:linear-matching-polynomial}.
\end{proof}

\begin{corollary}\label{corr-main-2}
Let $R$ be a   TRS with DAG-compressed left-hand sides and STG-compressed right hand sides 
and let 
$t$ be an STG-compressed term. Then $n$ term rewriting steps  
where the submatching algorithm in Subsection \ref{subsec-dag-compressed}    
is used 
can be performed in  polynomial time in $n$. 
\end{corollary}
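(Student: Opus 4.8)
The plan is to combine the generic rewriting-sequence machinery of Theorem~\ref{theorem:main-2} with the deterministic submatching algorithm of Subsection~\ref{subsec-dag-compressed}. Theorem~\ref{theorem:main-2} already shows that, treating submatching as a free oracle, a sequence of $n$ rewriting steps on the STG-compressed term $t$ with the compressed TRS $R$ can be carried out in polynomial time, and moreover the total size of all the intermediate grammars produced along the way stays bounded by a fixed polynomial $q(n,|G|,|G_R|)$. So it suffices to show that, under the hypotheses of the corollary, every oracle call can be replaced by an actual polynomial-time computation without destroying the polynomial bound.

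First I would observe that the rewrite rules themselves never change during the rewriting sequence: each step only copies $G_R$ and splices a copy of a right-hand side nonterminal $R_i$ into the part of the grammar contributing to $t$, while the left-hand side nonterminals $L_i$ stay exactly as given in $G_R$, and in particular remain DAG-compressed throughout. The target term, on the other hand, is STG-compressed initially and each step produces an STG extension of its grammar, so it stays STG-compressed. Hence every oracle invocation is an instance of compressed submatching with a DAG-compressed pattern $L_i$ and an STG-compressed target --- precisely the case covered by Theorem~\ref{theorem:DAG-and-uncompressed-submatch}, which computes in polynomial time an explicit polynomial-size representation of all submatchings, including all positions and the associated compressed ground substitutions. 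From that representation one reads off, for the selected rule and the selected redex (or, for a parallel step, the selected set of redexes), exactly the data required by the rewriting step of Section~\ref{sec:rewriting}: the grammar extension $G'$, the substitution $\sigma = \{x_1 \mapsto A_1,\dots,x_m \mapsto A_m\}$, the term nonterminal $A$ corresponding to $L_i$ and contributing to $\val(T)$, and the compressed position $p$.

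It then remains to bound the cost of the $n$ oracle calls: at step $k$ the submatching algorithm runs on a grammar of size at most $q(n,|G|,|G_R|)$, hence costs $\mathrm{poly}(q(n,|G|,|G_R|))$ by Theorem~\ref{theorem:DAG-and-uncompressed-submatch}; summing over the $n$ steps and adding the polynomial bound of Theorem~\ref{theorem:main-2} for the remaining work gives a polynomial overall running time in $n$. The only genuinely delicate point is the one used implicitly here, namely that replacing the free oracle by the concrete algorithm of Subsection~\ref{subsec-dag-compressed} does not enlarge the intermediate grammars beyond what Theorem~\ref{theorem:main-2} already accounts for; this holds because the concrete algorithm produces output in exactly the format the oracle is assumed to produce --- the instantiation productions $x_i \to A_i$ plus the auxiliary nonterminals naming the matched subterms --- so its use affects only constants, not the polynomial character of the size and time bounds.
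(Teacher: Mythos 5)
Your proposal is correct and follows exactly the paper's route: the paper proves this corollary in one line by combining Theorem~\ref{theorem:main-2} (rewriting with a submatching oracle) with Theorem~\ref{theorem:DAG-and-uncompressed-submatch} (polynomial-time DAG-compressed submatching), which is precisely the combination you carry out, just with the bookkeeping about grammar sizes and oracle costs made explicit.
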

\begin{proof}
This follows from Theorems \ref{theorem:main-2} and \ref{theorem:DAG-and-uncompressed-submatch}.
\end{proof}

\begin{corollary}\label{corr-main-3}
Let $R$ be an  STG-compressed TRS and $t$ be an STG-compressed term, such that the 
left hand sides of every rule has at most $|G|$ occurrences of variables. 
Then  $n$ term rewriting steps (see Remark \ref{remark-few-variables}) 
can be performed in  polynomial time in $n$. 
\end{corollary}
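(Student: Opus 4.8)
The plan is to obtain this corollary the same way the previous three corollaries are obtained, namely by plugging a suitable submatching procedure into Theorem~\ref{theorem:main-2}: since in that theorem submatching is an oracle whose cost is added per rewrite step, it suffices to exhibit, for every left-hand side $L_i$ of $R$, a \emph{deterministic} polynomial-time submatching procedure, and the bound in $n$ for the whole sequence of $n$ rewrites then follows directly. The procedure to use is the one sketched in Remark~\ref{remark-few-variables}, which is applicable precisely because each $\val(L_i)$ has at most $|G|$ variable occurrences. I would first note that this hypothesis is preserved along the rewrite sequence for free: the left-hand sides of $R$ are fixed, so the number of variable occurrences in $\val(L_i)$ never changes, while the grammar $G$ only grows, so the condition ``$\le |G|$'' of Remark~\ref{remark-few-variables} stays satisfied at every step.

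The core step is to spell out the oracle. Given the nonterminal $L_i$ with $\val(L_i)$ possibly non-linear, I would linearize it: replace each of the (at most $|G|$) occurrences of a variable by a fresh variable, producing in polynomial time an STG with variables for a linear term $\hat l_i$, together with a partition of the fresh variables recording the original variable each came from, plus the grammar machinery needed to read off, for each fresh hole, the prefix context from the root of $\hat l_i$ to that hole. Running the linear submatching algorithm of Theorem~\ref{thm:linear-matching-polynomial} on $\hat l_i$ (against the current grammar in the rewrite sequence) yields, in polynomial time, an $O(|G|^5)$-sized representation of all submatchings of $\hat l_i$, each given by a compressed position and a substitution of term nonterminals into the fresh variables. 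To pass from a submatch of $\hat l_i$ to a submatch of $\val(L_i)$ one checks the equalities forced by the partition: for two fresh variables stemming from the same original variable, the assigned term nonterminals must generate the same term, which is a Plandowski--Lifshits equality test on the extended STG and runs in polynomial time; the number of such tests per candidate submatch is at most the number of variable occurrences, hence $\le |G|$. From a surviving submatch one extracts the data required by the oracle of Theorem~\ref{theorem:main-2}: the grammar extension $G'$, the substitution $\{x_i \mapsto A_i\}$ with $\FV(\val(L_i)) = \{x_1,\ldots,x_m\}$, a term nonterminal $A$ contributing to $\val(T)$, and the compressed redex position $p$.

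The step I expect to be the main obstacle is reconciling the equality constraints with the \emph{compact} output of the linear submatching algorithm: a single periodic entry of that $O(|G|^5)$-sized table can encode exponentially many candidate submatches, so one cannot afford to enumerate them and test equality one by one. The fix I would aim for is to lift the equality check to act on whole periodic blocks: within such a block the substitution assigned to a fixed hole varies in a controlled way (by shifting along a fixed period context / along an SLP-described position), so consistency of the induced equalities can be decided symbolically on the block description via a bounded number of compressed-equality tests and arithmetic on SLP-positions, carving the block into polynomially many sub-blocks of genuinely consistent submatches. If instead one only needs to report a single redex for the rewrite step (or a small, grammar-determined set of parallel redexes), this obstacle is milder, since it suffices to locate one consistent submatch inside the representation. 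Once this filtering is in place, the oracle runs in polynomial time per step, and Theorem~\ref{theorem:main-2} converts this into the claimed polynomial bound in $n$ for the sequence of $n$ rewrite steps. \qed
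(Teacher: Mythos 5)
Your proposal follows the paper's intended route exactly: the corollary is obtained by plugging the submatching procedure of Remark~\ref{remark-few-variables} (linearize, run the linear submatching of Theorem~\ref{thm:linear-matching-polynomial}, then postprocess with compressed equality checks) into the rewriting framework of Theorem~\ref{theorem:main-2}, which is all the paper itself offers for this statement. Your additional discussion of how to handle equality constraints inside periodic blocks of the compact output goes beyond what the extended abstract spells out, but it addresses a real detail rather than changing the approach.
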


\section{Conclusion}  

We have constructed several polynomial algorithms  for finding a submatch under STG-compression, or restrictions thereof. 
It is also shown that  $n$ rewrite steps can be performed in polynomial time under STG-compression in several cases: 
left-linear and STG-compressed TRS, DAG-compressed or ground left hand sides of rules. Also in the general case of non-linear left hand sides
$n$ rewrites can be performed non-deterministically
in polynomial time, where a search for a redex is required. This is connected  to the open problem of the exact complexity of computing
submatches 
also for non-linear  terms.

A connection  to the results in  \cite{avanzini-moser:2010} on polynomial runtime complexity is that our results  also imply that for  
TRSs with polynomial runtime complexity the (single-position and parallel) rewriting can be implemented such that $n$ rewrite steps can be performed in polynomial time.

%
 A remaining open question is whether the general STG-compressed submatching (of nonlinear terms $s$ in $t$) can be solved in polynomial time or not. 
%

\providecommand{\doi}[1]{doi:\urlalt{http://dx.doi.org/#1}{\nolinkurl{#1}}}



\begin{thebibliography}{10}
\providecommand{\bibitemdeclare}[2]{}
\providecommand{\surnamestart}{}
\providecommand{\surnameend}{}
\providecommand{\urlprefix}{Available at }
\providecommand{\url}[1]{\texttt{#1}}
\providecommand{\href}[2]{\texttt{#2}}
\providecommand{\urlalt}[2]{\href{#1}{#2}}
\providecommand{\doi}[1]{doi:\urlalt{http://dx.doi.org/#1}{#1}}
\providecommand{\bibinfo}[2]{#2}

\bibitemdeclare{inproceedings}{avanzini-moser:2010}
\bibitem{avanzini-moser:2010}
\bibinfo{author}{Martin \surnamestart Avanzini\surnameend} \&
  \bibinfo{author}{Georg \surnamestart Moser\surnameend}
  (\bibinfo{year}{2010}): \emph{\bibinfo{title}{Closing the Gap Between Runtime
  Complexity and Polytime Computability}}.
\newblock In \bibinfo{editor}{Christopher \surnamestart Lynch\surnameend},
  editor: {\sl \bibinfo{booktitle}{21st RTA}}, {\sl
  \bibinfo{series}{LIPIcs}}~\bibinfo{volume}{6}, \bibinfo{publisher}{Schloss
  Dagstuhl}, \bibinfo{address}{Germany}, pp. \bibinfo{pages}{33--48},
  \doi{10.4230/LIPIcs.RTA.2010.33}.

\bibitemdeclare{book}{baader-nipkow:98}
\bibitem{baader-nipkow:98}
\bibinfo{author}{Franz \surnamestart Baader\surnameend} \&
  \bibinfo{author}{Tobias \surnamestart Nipkow\surnameend}
  (\bibinfo{year}{1998}): \emph{\bibinfo{title}{Term Rewriting and All That}}.
\newblock \bibinfo{publisher}{Cambridge University Press},
  \bibinfo{address}{New York, NY, USA}.

\bibitemdeclare{article}{berman-karpinski-2d:02}
\bibitem{berman-karpinski-2d:02}
\bibinfo{author}{Piotr \surnamestart Berman\surnameend}, \bibinfo{author}{Marek
  \surnamestart Karpinski\surnameend}, \bibinfo{author}{Lawrence~L.
  \surnamestart Larmore\surnameend}, \bibinfo{author}{Wojciech \surnamestart
  Plandowski\surnameend} \& \bibinfo{author}{Wojciech \surnamestart
  Rytter\surnameend} (\bibinfo{year}{2002}): \emph{\bibinfo{title}{On the
  Complexity of Pattern Matching for Highly Compressed Two-Dimensional Texts}}.
\newblock {\sl \bibinfo{journal}{J. Comput. Syst. Sci.}}
  \bibinfo{volume}{65}(\bibinfo{number}{2}), pp. \bibinfo{pages}{332--350},
  \doi{10.1006/jcss.2002.1852}.

\bibitemdeclare{inproceedings}{lohreymaneth:05}
\bibitem{lohreymaneth:05}
\bibinfo{author}{Giorgio \surnamestart Busatto\surnameend},
  \bibinfo{author}{Markus \surnamestart Lohrey\surnameend} \&
  \bibinfo{author}{Sebastian \surnamestart Maneth\surnameend}
  (\bibinfo{year}{2005}): \emph{\bibinfo{title}{Efficient Memory Representation
  of {XML} Documents}}.
\newblock In: {\sl \bibinfo{booktitle}{Proceedings of {DBPL} 2005}}, {\sl
  \bibinfo{series}{LNCS}} \bibinfo{volume}{3774}, pp.
  \bibinfo{pages}{199--216}, \doi{10.1007/11601524_13}.

\bibitemdeclare{article}{busatto-lohrey-maneth:08}
\bibitem{busatto-lohrey-maneth:08}
\bibinfo{author}{Giorgio \surnamestart Busatto\surnameend},
  \bibinfo{author}{Markus \surnamestart Lohrey\surnameend} \&
  \bibinfo{author}{Sebastian \surnamestart Maneth\surnameend}
  (\bibinfo{year}{2008}): \emph{\bibinfo{title}{Efficient Memory Representation
  of {XML} Document Trees}}.
\newblock {\sl \bibinfo{journal}{Information Systems}}
  \bibinfo{volume}{33}(\bibinfo{number}{4--5}), pp. \bibinfo{pages}{456--474},
  \doi{10.1016/j.is.2008.01.004}.

\bibitemdeclare{misc}{tata:97}
\bibitem{tata:97}
\bibinfo{author}{H.~\surnamestart Comon\surnameend},
  \bibinfo{author}{M.~\surnamestart Dauchet\surnameend},
  \bibinfo{author}{R.~\surnamestart Gilleron\surnameend},
  \bibinfo{author}{F.~\surnamestart Jacquemard\surnameend},
  \bibinfo{author}{D.~\surnamestart Lugiez\surnameend},
  \bibinfo{author}{S.~\surnamestart Tison\surnameend} \&
  \bibinfo{author}{M.~\surnamestart Tommasi\surnameend} (\bibinfo{year}{1997}):
  \emph{\bibinfo{title}{Tree Automata Techniques and Applications}}.
\newblock \urlprefix\url{{http://www.grappa.univ-lille3.fr/tata}}.
\newblock \bibinfo{note}{Release October 2002}.

\bibitemdeclare{inproceedings}{gascon-godoy-schmidt-schauss:08}
\bibitem{gascon-godoy-schmidt-schauss:08}
\bibinfo{author}{Adri{\`a} \surnamestart Gasc{\'o}n\surnameend},
  \bibinfo{author}{Guillem \surnamestart Godoy\surnameend} \&
  \bibinfo{author}{Manfred \surnamestart Schmidt-Schau{\ss}\surnameend}
  (\bibinfo{year}{2008}): \emph{\bibinfo{title}{Context Matching for Compressed
  Terms}}.
\newblock In: {\sl \bibinfo{booktitle}{23rd Annual IEEE Symposium on Logic in
  Computer Science (LICS 2008)}}, \bibinfo{publisher}{IEEE Computer Society},
  pp. \bibinfo{pages}{93--102}, \doi{10.1109/LICS.2008.17}.

\bibitemdeclare{article}{gascon-godoy-schmidt-schauss:TCL:2011}
\bibitem{gascon-godoy-schmidt-schauss:TCL:2011}
\bibinfo{author}{Adri{\`a} \surnamestart Gasc{\'o}n\surnameend},
  \bibinfo{author}{Guillem \surnamestart Godoy\surnameend} \&
  \bibinfo{author}{Manfred \surnamestart Schmidt-Schau{\ss}\surnameend}
  (\bibinfo{year}{2011}): \emph{\bibinfo{title}{Unification and matching on
  compressed terms}}.
\newblock {\sl \bibinfo{journal}{ACM Trans. Comput. Log.}}
  \bibinfo{volume}{12}(\bibinfo{number}{4}), pp. \bibinfo{pages}{26:1--26:37}.
\newblock \urlprefix\url{http://doi.acm.org/10.1145/1970398.1970402}.

\bibitemdeclare{inproceedings}{gasieniec-karpinski-plandowski-rytter-LZ:96}
\bibitem{gasieniec-karpinski-plandowski-rytter-LZ:96}
\bibinfo{author}{Leszek \surnamestart Gasieniec\surnameend},
  \bibinfo{author}{Marek \surnamestart Karpinski\surnameend},
  \bibinfo{author}{Wojciech \surnamestart Plandowski\surnameend} \&
  \bibinfo{author}{Wojciech \surnamestart Rytter\surnameend}
  (\bibinfo{year}{1996}): \emph{\bibinfo{title}{Efficient Algorithms for
  {L}empel-{Z}iv Encoding (Extended Abstract)}}.
\newblock In \bibinfo{editor}{Rolf~G. \surnamestart Karlsson\surnameend} \&
  \bibinfo{editor}{Andrzej \surnamestart Lingas\surnameend}, editors: {\sl
  \bibinfo{booktitle}{SWAT}}, {\sl \bibinfo{series}{Lecture Notes in Computer
  Science}} \bibinfo{volume}{1097}, \bibinfo{publisher}{Springer}, pp.
  \bibinfo{pages}{392--403}, \doi{10.1007/3-540-61422-2_148}.

\bibitemdeclare{inproceedings}{gasieniec-karpinski-plandowski-rytter:96}
\bibitem{gasieniec-karpinski-plandowski-rytter:96}
\bibinfo{author}{Leszek \surnamestart Gasieniec\surnameend},
  \bibinfo{author}{Marek \surnamestart Karpinski\surnameend},
  \bibinfo{author}{Wojciech \surnamestart Plandowski\surnameend} \&
  \bibinfo{author}{Wojciech \surnamestart Rytter\surnameend}
  (\bibinfo{year}{1996}): \emph{\bibinfo{title}{Randomized Efficient Algorithms
  for Compressed Strings: The Finger-Print Approach (Extended Abstract)}}.
\newblock In: {\sl \bibinfo{booktitle}{7th CPM 96}}, {\sl
  \bibinfo{series}{Lecture Notes in Computer Science}} \bibinfo{volume}{1075},
  \bibinfo{publisher}{Springer}, pp. \bibinfo{pages}{39--49},
  \doi{10.1007/3-540-61258-0_3}.

\bibitemdeclare{inproceedings}{jez-matching:2012}
\bibitem{jez-matching:2012}
\bibinfo{author}{Artur \surnamestart Jez\surnameend} (\bibinfo{year}{2012}):
  \emph{\bibinfo{title}{Faster Fully Compressed Pattern Matching by
  Recompression}}.
\newblock In: {\sl \bibinfo{booktitle}{ICALP (1)}}, {\sl
  \bibinfo{series}{Lecture Notes in Computer Science}} \bibinfo{volume}{7391},
  \bibinfo{publisher}{Springer}, pp. \bibinfo{pages}{533--544},
  \doi{10.1007/978-3-642-31594-7_45}.

\bibitemdeclare{inproceedings}{karpinski:95}
\bibitem{karpinski:95}
\bibinfo{author}{Marek \surnamestart Karpinski\surnameend},
  \bibinfo{author}{Wojciech \surnamestart Rytter\surnameend} \&
  \bibinfo{author}{Ayumi \surnamestart Shinohara\surnameend}
  (\bibinfo{year}{1995}): \emph{\bibinfo{title}{Pattern-matching for strings
  with short description}}.
\newblock In: {\sl \bibinfo{booktitle}{CPM '95}}, {\sl \bibinfo{series}{LNCS}}
  \bibinfo{volume}{937}, \bibinfo{publisher}{Springer-Verlag}, pp.
  \bibinfo{pages}{205--214}, \doi{10.1007/3-540-60044-2_44}.

\bibitemdeclare{inproceedings}{levyvillaretschauss:06a}
\bibitem{levyvillaretschauss:06a}
\bibinfo{author}{Jordi \surnamestart Levy\surnameend}, \bibinfo{author}{Manfred
  \surnamestart Schmidt-Schau{\ss}\surnameend} \& \bibinfo{author}{Mateu
  \surnamestart Villaret\surnameend} (\bibinfo{year}{2006}):
  \emph{\bibinfo{title}{Bounded Second-Order Unification is {NP}-complete}}.
\newblock In: {\sl \bibinfo{booktitle}{Term Rewriting and Applications
  (RTA-17)}}, {\sl \bibinfo{series}{LNCS}} \bibinfo{volume}{4098},
  \bibinfo{publisher}{Springer}, pp. \bibinfo{pages}{400--414},
  \doi{10.1007/11805618_30}.

\bibitemdeclare{article}{levy-schmidt-schauss-villaret:08}
\bibitem{levy-schmidt-schauss-villaret:08}
\bibinfo{author}{Jordi \surnamestart Levy\surnameend}, \bibinfo{author}{Manfred
  \surnamestart Schmidt-Schau{\ss}\surnameend} \& \bibinfo{author}{Mateu
  \surnamestart Villaret\surnameend} (\bibinfo{year}{2008}):
  \emph{\bibinfo{title}{The Complexity of Monadic Second-Order Unification}}.
\newblock {\sl \bibinfo{journal}{SIAM J. of Computing}}
  \bibinfo{volume}{38}(\bibinfo{number}{3}), pp. \bibinfo{pages}{1113--1140},
  \doi{10.1137/050645403}.

\bibitemdeclare{inproceedings}{lifshits:07}
\bibitem{lifshits:07}
\bibinfo{author}{Yury \surnamestart Lifshits\surnameend}
  (\bibinfo{year}{2007}): \emph{\bibinfo{title}{Processing Compressed Texts: A
  Tractability Border}}.
\newblock In: {\sl \bibinfo{booktitle}{{CPM} 2007}}, {\sl
  \bibinfo{series}{LNCS}} \bibinfo{volume}{4580},
  \bibinfo{publisher}{Springer}, pp. \bibinfo{pages}{228--240}.
\newblock \urlprefix\url{http://dx.doi.org/10.1007/978-3-540-73437-6_24}.

\bibitemdeclare{article}{lohrey-overview:12}
\bibitem{lohrey-overview:12}
\bibinfo{author}{Markus \surnamestart Lohrey\surnameend}
  (\bibinfo{year}{2012}): \emph{\bibinfo{title}{Algorithmics on
  {SLP}-compressed strings. A survey}}.
\newblock {\sl \bibinfo{journal}{Groups Complexity Cryptology}}
  \bibinfo{volume}{4}(\bibinfo{number}{2}), pp. \bibinfo{pages}{241--299},
  \doi{10.1515/gcc-2012-0016}.

\bibitemdeclare{inproceedings}{lohrey-maneth-schauss:09}
\bibitem{lohrey-maneth-schauss:09}
\bibinfo{author}{Markus \surnamestart Lohrey\surnameend},
  \bibinfo{author}{Sebastian \surnamestart Maneth\surnameend} \&
  \bibinfo{author}{Manfred \surnamestart Schmidt-Schau{\ss}\surnameend}
  (\bibinfo{year}{2009}): \emph{\bibinfo{title}{Parameter Reduction in
  Grammar-Compressed Trees}}.
\newblock In: {\sl \bibinfo{booktitle}{12th FoSSaCS}}, {\sl
  \bibinfo{series}{LNCS}} \bibinfo{volume}{5504},
  \bibinfo{publisher}{Springer}, pp. \bibinfo{pages}{212--226},
  \doi{10.1007/978-3-642-00596-1_16}.

\bibitemdeclare{article}{lohrey-maneth-schmidtschauss:12}
\bibitem{lohrey-maneth-schmidtschauss:12}
\bibinfo{author}{Markus \surnamestart Lohrey\surnameend},
  \bibinfo{author}{Sebastian \surnamestart Maneth\surnameend} \&
  \bibinfo{author}{Manfred \surnamestart Schmidt-Schau{\ss}\surnameend}
  (\bibinfo{year}{2012}): \emph{\bibinfo{title}{Parameter reduction and
  automata evaluation for grammar-compressed trees}}.
\newblock {\sl \bibinfo{journal}{J. Comput. Syst. Sci.}}
  \bibinfo{volume}{78}(\bibinfo{number}{5}), pp. \bibinfo{pages}{1651--1669},
  \doi{10.1016/j.jcss.2012.03.003}.

\bibitemdeclare{inproceedings}{Plandowski:94}
\bibitem{Plandowski:94}
\bibinfo{author}{Wojciech \surnamestart Plandowski\surnameend}
  (\bibinfo{year}{1994}): \emph{\bibinfo{title}{Testing equivalence of
  morphisms in context-free languages}}.
\newblock In: {\sl \bibinfo{booktitle}{ESA 94}}, {\sl \bibinfo{series}{Lecture
  Notes in Computer Science}} \bibinfo{volume}{855}, pp.
  \bibinfo{pages}{460--470}, \doi{10.1007/BFb0049431}.

\bibitemdeclare{inproceedings}{Plandowski-Rytter:99}
\bibitem{Plandowski-Rytter:99}
\bibinfo{author}{Wojciech \surnamestart Plandowski\surnameend} \&
  \bibinfo{author}{Wojciech \surnamestart Rytter\surnameend}
  (\bibinfo{year}{1999}): \emph{\bibinfo{title}{Complexity of Language
  Recognition Problems for Compressed Words}}.
\newblock In: {\sl \bibinfo{booktitle}{Jewels are Forever}},
  \bibinfo{publisher}{Springer}, pp. \bibinfo{pages}{262--272},
  \doi{10.1007/978-3-642-60207-8_23}.

\bibitemdeclare{inproceedings}{rytter:04}
\bibitem{rytter:04}
\bibinfo{author}{Wojciech \surnamestart Rytter\surnameend}
  (\bibinfo{year}{2004}): \emph{\bibinfo{title}{Grammar {C}ompression,
  {LZ}-Encodings, and String Algorithms with Implicit Input}}.
\newblock In \bibinfo{editor}{J.~Diaz \surnamestart et. al.\surnameend},
  editor: {\sl \bibinfo{booktitle}{ICALP 2004}}, {\sl \bibinfo{series}{LNCS}}
  \bibinfo{volume}{3142}, \bibinfo{publisher}{Springer-Verlag}, pp.
  \bibinfo{pages}{15--27}, \doi{10.1007/978-3-540-27836-8_5}.

\bibitemdeclare{techreport}{schmidt-schauss:05-stg}
\bibitem{schmidt-schauss:05-stg}
\bibinfo{author}{Manfred \surnamestart Schmidt-Schau{\ss}\surnameend}
  (\bibinfo{year}{2005}): \emph{\bibinfo{title}{Polynomial Equality Testing for
  Terms with Shared Substructures}}.
\newblock \bibinfo{type}{Frank report} \bibinfo{number}{21},
  \bibinfo{institution}{Institut f\"ur Informatik. FB Informatik und
  Mathematik. Goethe-Universit\"at Frankfurt}.

\bibitemdeclare{misc}{schmidt-schauss-linear-prelim:13}
\bibitem{schmidt-schauss-linear-prelim:13}
\bibinfo{author}{Manfred \surnamestart Schmidt-Schauss\surnameend}
  (\bibinfo{year}{2013}): \emph{\bibinfo{title}{Linear Pattern Matching of
  Compressed Terms and Polynomial Rewriting}}.
\newblock \bibinfo{note}{Accepted for publication, 2013}.

\bibitemdeclare{article}{schmidt-schauss-schnitger:12}
\bibitem{schmidt-schauss-schnitger:12}
\bibinfo{author}{Manfred \surnamestart Schmidt-Schauss\surnameend} \&
  \bibinfo{author}{Georg \surnamestart Schnitger\surnameend}
  (\bibinfo{year}{2012}): \emph{\bibinfo{title}{Fast Equality Test for
  Straight-Line Compressed Strings}}.
\newblock {\sl \bibinfo{journal}{Information processing letters}},
  \doi{10.1016/j.ipl.2012.01.008}.

\bibitemdeclare{article}{ziv-lempel:77}
\bibitem{ziv-lempel:77}
\bibinfo{author}{Jacob \surnamestart Ziv\surnameend} \&
  \bibinfo{author}{Abraham \surnamestart Lempel\surnameend}
  (\bibinfo{year}{1977}): \emph{\bibinfo{title}{A Universal Algorithm for
  Sequential Data Compression}}.
\newblock {\sl \bibinfo{journal}{IEEE Transactions on Information Theory}}
  \bibinfo{volume}{23}(\bibinfo{number}{3}), pp. \bibinfo{pages}{337--343},
  \doi{10.1109/TIT.1977.1055714}.

\end{thebibliography}
%

 \end{document}